\documentclass[journal,12pt,onecolumn,draftclsnofoot,]{IEEEtran}
\usepackage{geometry}
\geometry{left=1.3cm, right=1.3cm, top=1.9cm, bottom=4.3cm}
\usepackage{graphicx}
\usepackage{epstopdf}
\usepackage{amsmath}
\usepackage{amssymb}
\usepackage{amsfonts}
\usepackage{mathtools}
\newcommand{\e}[1]{{\mathbb E}\left[ #1 \right]}
\newcommand{\p}[1]{{\mathbb P}\left[ #1 \right]}
\newcommand{\defeq}{\vcentcolon=}
\usepackage{amsthm}

\DeclareSymbolFont{bbold}{U}{bbold}{m}{n}
\DeclareSymbolFontAlphabet{\mathbbold}{bbold}

\newtheorem{theorem}{Theorem}

\newtheorem{lemma}[theorem]{Lemma}

\newtheorem{prop}[theorem]{Property}
\newtheorem{remark}[theorem]{Remark}

\allowdisplaybreaks
\pdfminorversion 4
\title{\mbox{Feedback-Capacity of Degraded\! Gaussian\! Vector\! BC} using Directed Information and Concave Envelopes}
\author{
\IEEEauthorblockN{Viswanathan~Ramachandran and Sibi~Raj~B~Pillai}\\
\IEEEauthorblockA{Department of Electrical Engineering\\
        Indian Institute of Technology Bombay }
}
\begin{document}
%\IEEEpubidadjcol 

\maketitle

\noindent \begin{abstract}
\textbf{
It is known that the capacity region of a two user physically degraded discrete memoryless (DM) broadcast channel (BC) is not enlarged by feedback. An identical result holds true for a physically degraded Gaussian BC, established later using a variant of the Entropy Power Inequality (EPI). In this paper, we extend the latter result to a physically degraded Gaussian Vector BC (PD-GVBC). However, the extension is not EPI based, but employs a recent result on the factorization of concave envelopes. While the existing concave envelope factorization results do not hold in the presence of feedback, we show that factorizing the corresponding directed information quantities suffice to attain the feedback capacity region of a PD-GVBC. Our work demonstrates that factorizing concave envelopes of directed information can handle situations involving feedback. We further show that the capacity region of a discrete memoryless reversely physically degraded BC is not enlarged by feedback.}
\end{abstract}

\section{Introduction}

The capacity region of a  physically degraded (PD) discrete memoryless (DM) broadcast channel (BC) is not enlarged by the presence of perfect feedback from both the receivers, a result which was established by El Gamal~\cite{gamal1978feedback}. The same conclusion was arrived at for PD Gaussian BCs with feedback by El Gamal~\cite{gah4al1981capacity}. The latter proof involved a variant of the Entropy Power Inequality (EPI), in order to incorporate the presence of feedback.

\par{Evaluating the capacity regions or bounds to it in the additive Gaussian settings many-a-times involve the computation of extremal auxiliaries, for example, the broadcast channel~\cite{el2011network}. The standard technique for proving the optimality of Gaussian auxiliaries in additive Gaussian noise settings is the EPI, or its variants. The application of EPI may not always be directly viable, and can be cumbersome, see~\cite{weingarten2006capacity}. Recently, Geng and Nair~\cite{geng2012capacity}  proposed a technique for proving the optimality of Gaussian auxiliaries and used it to establish the capacity region of a two receiver Gaussian vector BC (GVBC) with both private and common messages. Their proof technique serves as a new tool for proving converses in Gaussian settings, instead of applying the traditional EPI.}

\par{In this paper, we extend the result in \cite{gah4al1981capacity} for PD Gaussian BCs to the vector BC case. This is done by extending the technique of concave envelopes~\cite{geng2012capacity}. However this extension is not very straightforward. In particular, \cite{geng2012capacity} requires the factorization of concave envelopes of multiletter mutual information terms. We show that this may not hold in the presence of feedback. Our major contribution is a reformulation of the concave envelopes in terms of \emph{directed information} instead of mutual information. Interestingly, this factorization helps in concluding that the capacity region of a PD-GVBC is not enlarged by feedback. The novelty of the paper also lies in bringing out the fact that factorizing concave envelopes of directed information can handle situations involving feedback, where concave envelopes of mutual information are not factorizable.}

\par{ We also show that feedback does not enlarge the capacity region of a reversely physically degraded DM broadcast channel (RPDBC), a result of independent interest. The paper is organized as follows. We introduce the system model in Section \ref{chm}. In Section \ref{conenv}, we review the notion of concave envelopes in Geng and Nair's context~\cite{geng2012capacity} and the notion of directed information. Section \ref{gbcfb} describes the feedback capacity region of a PD-GVBC. Section \ref{rddmbc} deals with the  feedback capacity region of a RPDBC. Finally Section \ref{concl} concludes the paper.}

\section{System Model}\label{chm}

Consider the broadcast model in Fig 1. Here, $M_{1}$ and $M_{2}$ are two independent 
messages to the respective receivers. $M_i$ is uniformly distributed over $[1:2^{nR_{i}}], \,i \texttt=1,2$. The BC $q(\mathbf{y},\mathbf{z}{\vert}\mathbf{x})$ is physically degraded i.e. $q(\mathbf{y},\mathbf{z}{\vert}\mathbf{x})=p_{1}(\mathbf{y}{\vert}\mathbf{x})p_{2}(\mathbf{z}{\vert}\mathbf{y})$. More specifically, $\textbf{Y}=\textbf{X}+\textbf{N}$ and $\textbf{Z}=\textbf{Y}+\tilde{\textbf{N}}$, where $\textbf{N}$ and $\tilde{\textbf{N}}$ are independent of each other, and of $\textbf{X}$. There is perfect feedback from both receivers to the encoder. 

\begin{figure}[h]
\begin{center}
\includegraphics[scale=1.2]{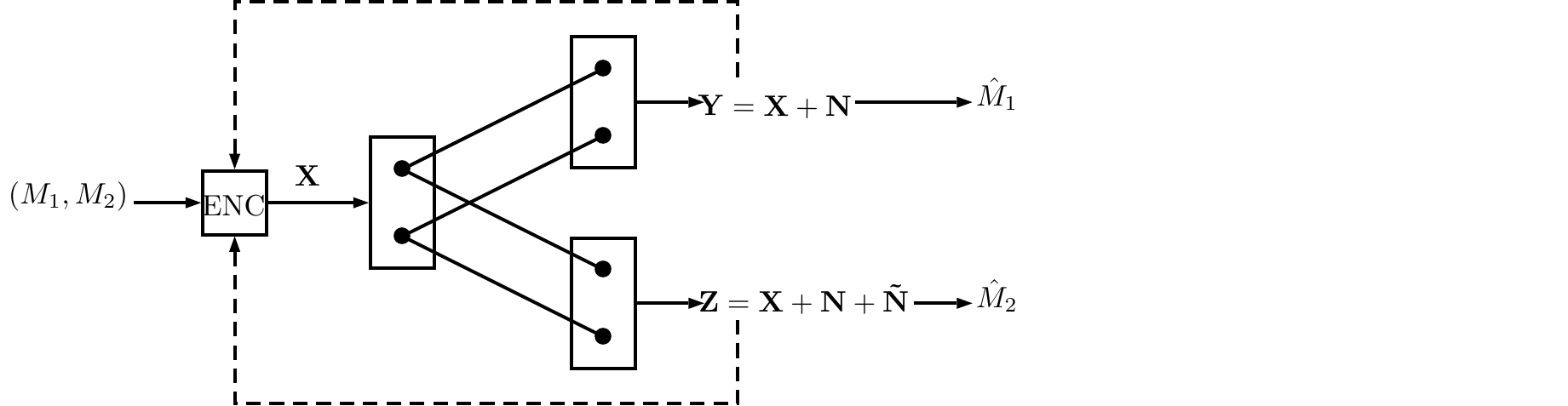}
\label{model}
\caption{Gaussian PD Vector BC with feedback} 
\end{center}
\end{figure}

We will assume without loss of generality that all the random vectors are zero mean. Here $\mathbf{N} \sim \mathcal{N}(\mathbf{0},\mathbf{K})$ and $\tilde{\mathbf{N}} \sim \mathcal{N}(\mathbf{0},\tilde{\mathbf{K}})$. The input is subject to a covariance matrix constraint $\e{{\mathbf{X}}{{\mathbf{X}}^{\mathbf{T}}}} {\preccurlyeq} {{\mathbf{K}}^{'}}$. A $(2^{nR_{1}},2^{nR_{2}},n,\lambda_n)$ code for the channel is defined to be a pair of encoder mappings, $\mathbf{X}_{i}=f_{i}(M_{1},M_{2},\mathbf{Y}^{i-1},\mathbf{Z}^{i-1}), \,i \texttt=1,2$ and two decoder mappings $d_{i} : {\mathbf{\mathbb{R}}}^{n} \to [1:2^{nR_{i}}], i=1,2$. The probability of error $\lambda_n$ is given by
\begin{equation}
P_{e,1}^{(n)}=\frac{1}{2^{nR_{1}}} \sum_{m_{1}} \p{(d_{1}(\mathbf{Y}^{n}){\neq}m_{1}{\vert}m_{1}\:\: \text{sent})}
\end{equation}
\begin{equation}
P_{e,2}^{(n)}=\frac{1}{2^{nR_{2}}} \sum_{m_{2}} \p{(d_{2}(\mathbf{Z}^{n}){\neq}m_{2}{\vert}m_{2}\:\: \text{sent})}
\end{equation}
\begin{equation}
\lambda_{n}=\max{(P_{e,1},P_{e,2})}.
\end{equation}
A rate pair $(R_{1},R_{2})$ is said to be achievable if there exists a sequence of $(2^{nR_{1}},2^{nR_{2}},n,\lambda_n)$ codes with $\lambda_{n} \to 0$ as $n \to \infty$. The capacity region $\mathcal{C}_{VBC}^{fb}$ is defined to be the closure of the set of achievable $(R_{1},R_{2})$ pairs.

El Gamal~\cite{gamal1978feedback} showed that the capacity region of a physically degraded DM broadcast channel is not enlarged by feedback. Thus the capacity is given by superposition coding:
\begin{theorem}~\cite{gamal1978feedback}\label{thm:sup}
The capacity region of a DM physically degraded BC $q(y,z{\vert}x)$ with perfect feedback from both decoders is the set of $(R_{1},R_{2})$ pairs such that
\begin{gather}
R_{1} \leq I(X;Y{\vert}U)\\
R_{2} \leq I(U;Z),
\end{gather}
\noindent{for some $p(u,x)$, where $U$ is an auxiliary random variable with $U \to X \to Y \to Z$, and $|\mathcal{U}|{\leq}\min{{(|\mathcal{X}|,|\mathcal{Y}|,|\mathcal{Z}|)}}+1$.}
\end{theorem}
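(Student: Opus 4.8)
The plan is to prove achievability and the converse separately, with essentially all the work in the converse. For achievability I would simply observe that the claimed region is exactly the superposition-coding inner bound for the degraded DM-BC \emph{without} feedback, and that a feedback code is at least as powerful as a feedforward one; so it suffices to recall the classical scheme --- a cloud-center codebook $u^n(m_2)$ drawn i.i.d.\ from $p(u)$, satellite codewords $x^n(m_1,m_2)$ drawn from $\prod_i p(x_i\mid u_i)$, joint-typicality decoding of $m_2$ at the degraded receiver from $Z^n$ and of $(m_1,m_2)$ at the stronger receiver from $Y^n$ --- whose analysis is standard. Nothing about feedback is needed here.

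For the converse I would start from Fano's inequality and the independence of $M_1$ and $M_2$ to get $nR_2 \le I(M_2;Z^n) + n\epsilon_n$ and $nR_1 \le I(M_1;Y^n\mid M_2) + n\epsilon_n$, with $\epsilon_n\to 0$ since $\lambda_n\to 0$. The single-letterization uses the auxiliary $U_i \defeq (M_2, Z^{i-1})$. The bound on $R_2$ is immediate: $I(M_2;Z^n) = \sum_{i} I(M_2;Z_i\mid Z^{i-1}) \le \sum_i I(M_2,Z^{i-1};Z_i) = \sum_i I(U_i;Z_i)$. The delicate bound is on $R_1$, and this is where physical degradedness must be used. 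Because $q(y,z\mid x)=p_1(y\mid x)p_2(z\mid y)$ and the channel is memoryless, $Z^n - Y^n - (M_1,M_2,X^n)$ is a Markov chain, whence $I(M_1;Z^n\mid Y^n,M_2)=0$ and so $I(M_1;Y^n\mid M_2) = I(M_1;Y^n,Z^n\mid M_2)$. Expanding by the chain rule and discarding the terms $I(M_1;Z_i\mid M_2,Y^i,Z^{i-1})$, which vanish by the same Markov property, gives $I(M_1;Y^n\mid M_2) = \sum_i I(M_1;Y_i\mid M_2,Y^{i-1},Z^{i-1})$. I would then enlarge the conditioning set: $I(M_1;Y_i\mid M_2,Y^{i-1},Z^{i-1}) \le I(M_1,Y^{i-1},X_i;Y_i\mid M_2,Z^{i-1})$, and since $X_i = f_i(M_1,M_2,Y^{i-1},Z^{i-1})$ and $Y_i$ depends on the past only through $X_i$, the right-hand side equals $I(X_i;Y_i\mid M_2,Z^{i-1}) = I(X_i;Y_i\mid U_i)$. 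Finally $U_i \to X_i \to Y_i$ holds because the channel is memoryless and $X_i \to Y_i \to Z_i$ because it is degraded, so $U_i\to X_i\to Y_i\to Z_i$ as required.

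To conclude, I would introduce a time-sharing index $Q$ uniform on $[1:n]$ and independent of everything else, and set $U\defeq(Q,U_Q)$, $X\defeq X_Q$, $Y\defeq Y_Q$, $Z\defeq Z_Q$; the two chains of inequalities become $R_1 \le I(X;Y\mid U) + \epsilon_n$ and $R_2 \le I(U;Z) + \epsilon_n$ with $U\to X\to Y\to Z$, and $n\to\infty$ removes $\epsilon_n$. The cardinality bound $|\mathcal U|\le\min(|\mathcal X|,|\mathcal Y|,|\mathcal Z|)+1$ would follow from the usual support lemma (a Fenchel--Eggleston--Carath\'eodory argument) applied to the functionals that must be preserved --- the input pmf $p(x)$ together with $I(X;Y\mid U)$ and $I(U;Z)$.

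I expect the only genuine obstacle to be the $R_1$ step. At first glance the encoder's dependence on the feedback $Y^{i-1}$ seems to block replacing that history by $Z^{i-1}$ inside the auxiliary $U_i$, which is precisely what the superposition form demands; it is the physical-degradedness Markov chain $Z^n - Y^n - (M_1,M_2,X^n)$ that lets one trade $Y^n$-information for $Z^n$-information in the right places. Everything else --- Fano, the chain-rule bookkeeping, the time-sharing variable, and the cardinality bound --- is routine.
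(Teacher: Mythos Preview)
The paper does not give its own proof of this theorem: it is quoted verbatim as El~Gamal's 1978 result, with a citation and no argument. Your proposal is a correct reconstruction of the classical converse --- Fano, the auxiliary $U_i=(M_2,Z^{i-1})$, the use of physical degradedness $Z^n-Y^n-(M_1,M_2,X^n)$ to absorb the $Z_i$-terms and to trade $Y^{i-1}$ for $Z^{i-1}$, and a time-sharing/Carath\'eodory finish --- and matches the original approach in spirit and detail; there is nothing to compare against in this paper.
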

\noindent El Gamal~\cite{gah4al1981capacity} further showed that even in the scalar Gaussian case, feedback does not improve the capacity region if the channel is physically degraded. This result required an adaptation of the EPI.

\par{As seen in Theorem \ref{thm:sup}, the capacity regions or bounds thereof are often represented in terms of auxiliary random variables. The evaluation of these regions reduce to optimization problems over the joint distribution of input and auxiliary random variables. In Gaussian problems, proving the optimality of Gaussian auxiliaries generally involves an application of the EPI, or its variants~\cite{el2011network}. The application of EPI can be cumbersome in some cases. For instance, the converse for the Gaussian Multiple Input Multiple Output (MIMO) BC problem required the introduction of the notion of an enhanced channel~\cite{weingarten2006capacity} in order to make the application of EPI viable. However, even the result of \cite{weingarten2006capacity} cannot handle a vector Gaussian BC with both private and common messages. The latter problem was recently solved by Geng and Nair~\cite{geng2012capacity} by the method of factorization of concave envelopes. But the concave envelopes in \cite{geng2012capacity} do not factorize in the presence of feedback. We reformulate the factorization of concave envelopes in terms of directed information to yield the capacity region of a PD-GVBC with feedback. Our main result is the following theorem:}
\begin{theorem} \label{thm:pdvecfb}
The capacity region $\mathcal{C}_{VBC}^{fb}$ of a PD-GVBC with perfect feedback from both the receivers to the transmitter is the union of the set of $(R_{1},R_{2})$ pairs such that
\begin{gather}
R_{1} \leq \frac{1}{2} \log\Bigg(\frac{\left\vert{\textbf{B}_{\textbf{1}}+\textbf{K}}\right\vert}{\left\vert{\textbf{K}}\right\vert}\Bigg) \\
R_{2} \leq \frac{1}{2} \log\Bigg(\frac{\left\vert{ \textbf{B}_{\textbf{2}}+\textbf{B}_{\textbf{1}}+\textbf{K}+\tilde{\textbf{K}}}\right\vert} {\left\vert{ \textbf{B}_{\textbf{1}}+\textbf{K}+\tilde{\textbf{K}}}\right\vert}\Bigg),
\end{gather}
for some $\textbf{B}_{\textbf{1}},\textbf{B}_{\textbf{2}}\succcurlyeq0$, with $\textbf{B}_{\textbf{1}}+\textbf{B}_{\textbf{2}}\:{\preccurlyeq}\:\textbf{K}^{'}$.
\end{theorem}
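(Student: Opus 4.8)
\emph{Proof strategy.}

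\textbf{Achievability} is the easy direction. Since the encoder may discard the feedback links, $\mathcal{C}_{VBC}^{fb}$ contains the no-feedback capacity region, so it suffices to produce a code without feedback. I would take $\mathbf{X}=\mathbf{U}+\mathbf{V}$ with independent $\mathbf{U}\sim\mathcal{N}(\mathbf{0},\mathbf{B}_{\mathbf 2})$ carrying $M_2$ and $\mathbf{V}\sim\mathcal{N}(\mathbf{0},\mathbf{B}_{\mathbf 1})$ carrying $M_1$, so that $\e{\mathbf{X}\mathbf{X}^{\mathbf T}}=\mathbf{B}_{\mathbf 1}+\mathbf{B}_{\mathbf 2}\preccurlyeq\mathbf{K}'$, and invoke the cost-constrained Gaussian analogue of the superposition region in Theorem~\ref{thm:sup}. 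A short determinant computation of $I(\mathbf{X};\mathbf{Y}\vert\mathbf{U})$ and $I(\mathbf{U};\mathbf{Z})$ under these Gaussian laws then reproduces exactly the two log-det expressions in the statement; this is the only place where I would carry out explicit algebra.

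\textbf{Converse, Step 1 (Fano and the obstruction).} For any sequence of $(2^{nR_1},2^{nR_2},n,\lambda_n)$ codes with $\lambda_n\to0$, independence of $M_1$ and $M_2$ together with Fano's inequality yields $nR_1\le I(M_1;\mathbf{Y}^n\vert M_2)+n\epsilon_n$ and $nR_2\le I(M_2;\mathbf{Z}^n)+n\epsilon_n$ with $\epsilon_n\to0$. The classical next step would introduce an auxiliary $\mathbf{U}_i$ bundling $M_2$ with the feedback history, conclude $R_1\le\tfrac{1}{n}\sum_i I(\mathbf{X}_i;\mathbf{Y}_i\vert\mathbf{U}_i)$ and $R_2\le\tfrac{1}{n}\sum_i I(\mathbf{U}_i;\mathbf{Z}_i)$, and then run the Geng--Nair concave-envelope argument~\cite{geng2012capacity} to force Gaussian auxiliaries. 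The obstruction, which motivates the whole paper, is that $\mathbf{X}_i=f_i(M_1,M_2,\mathbf{Y}^{i-1},\mathbf{Z}^{i-1})$ couples the current input to \emph{both} output histories, so the Markov relations required to single-letterize and to factorize the concave envelope of the mutual-information functional fail under the feedback-induced joint law: the choice $\mathbf{U}_i=(M_2,\mathbf{Z}^{i-1})$ handles the $R_2$ term but breaks the $R_1$ term, while $\mathbf{U}_i=(M_2,\mathbf{Y}^{i-1})$ does the reverse.

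\textbf{Converse, Step 2 (directed information and factorization).} The remedy is to respect causality. I would recast the outer bound in terms of the directed-information quantities reviewed in Section~\ref{conenv}, whose causal conditioning is exactly aligned with the dependence $\mathbf{X}_i=f_i(M_1,M_2,\mathbf{Y}^{i-1},\mathbf{Z}^{i-1})$ and with the physical degradedness $\mathbf{X}^n\to\mathbf{Y}^n\to\mathbf{Z}^n$ of the memoryless channel. The key lemma to establish is that, although the concave envelope of the mutual-information functional does \emph{not} factorize across the $n$ channel uses on the feedback law, the concave envelope of the corresponding \emph{directed}-information functional is sub-additive over the $n$ uses. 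Feeding the $n$-letter law of the code into this factorization and then averaging over time with a uniform time-sharing variable $Q$ --- the covariance constraint $\tfrac{1}{n}\sum_i\e{\mathbf{X}_i\mathbf{X}_i^{\mathbf T}}\preccurlyeq\mathbf{K}'$ surviving by convexity --- produces a single-letter outer bound $R_1\le I(\mathbf{X};\mathbf{Y}\vert\mathbf{U})$, $R_2\le I(\mathbf{U};\mathbf{Z})$ over $p(\mathbf{u},\mathbf{x})$ with $\mathbf{U}\to\mathbf{X}\to\mathbf{Y}\to\mathbf{Z}$ and $\e{\mathbf{X}\mathbf{X}^{\mathbf T}}\preccurlyeq\mathbf{K}'$, whose extremal distribution the same concave-envelope argument further certifies to be jointly Gaussian. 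Writing $\mathbf{B}_{\mathbf 1}$ for the conditional covariance of $\mathbf{X}$ given $\mathbf{U}$ and $\mathbf{B}_{\mathbf 2}$ for the covariance of $\e{\mathbf{X}\vert\mathbf{U}}$, positive semidefiniteness is automatic and the cost constraint gives $\mathbf{B}_{\mathbf 1}+\mathbf{B}_{\mathbf 2}=\e{\mathbf{X}\mathbf{X}^{\mathbf T}}\preccurlyeq\mathbf{K}'$; substituting into the two single-letter mutual informations recovers the log-det expressions in the statement and closes the converse.

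\textbf{Anticipated main obstacle.} The crux is the directed-information factorization lemma of Step~2: transporting the Geng--Nair concave-envelope tensorization from the no-feedback (mutual-information) regime, where it holds, into the feedback (directed-information) regime, where it must be proved and where the mutual-information version provably fails. Step~1, the time-sharing and convexity bookkeeping for the covariance constraint, the Gaussian-extremality and cardinality details of the single-letter optimization, and the achievability direction are all comparatively routine once this lemma is in place.
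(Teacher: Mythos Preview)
Your proposal correctly identifies the key tool --- factorization of a directed-information concave envelope --- and correctly flags it as the main obstacle. But you deploy it at a different point in the argument than the paper, and the place where you wave your hands is exactly where the paper does the real work.

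In the paper, single-letterization is \emph{not} obtained by an $n$-letter factorization of directed envelopes plus time-sharing. The paper simply invokes El~Gamal's discrete-memoryless feedback result (Theorem~\ref{thm:sup}) to conclude that any achievable pair lies in the single-letter region $\mathcal{O}_{\mathbf{K}'}=\bigcup_{p(v,\mathbf{x})}\{(I(V;\mathbf{Z}),\,I(\mathbf{X};\mathbf{Y}\vert V))\}$ under the covariance constraint; no directed information appears at this stage. The directed-information machinery enters only in proving $\mathcal{O}_{\mathbf{K}'}\subseteq\mathcal{I}_{\mathbf{K}'}$, i.e.\ Gaussian optimality of the single-letter auxiliary. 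After a supporting-hyperplane reduction this amounts to showing that $V_\lambda^q(\mathbf{K}')=\sup\overrightarrow{s}_\lambda^q(\mathbf{X}\vert V)$ is attained by a Gaussian law, and the paper does this via the Geng--Nair \emph{two-letter} doubling trick: rotation invariance of two-letter directed information (Property~\ref{lem:rot}), the two-letter subadditivity $\overrightarrow{S}_\lambda^{q\times q}(\mathbf{X}_1,\mathbf{X}_2)\le\overrightarrow{S}_\lambda^q(\mathbf{X}_1)+\overrightarrow{S}_\lambda^q(\mathbf{X}_2)$ (Property~\ref{lem:2letfact}, whose step~(c) is precisely where physical degradedness is exploited), then Lemma~\ref{lem:final} showing that the set of maximizing conditionals is closed under self-convolution, and finally a CLT argument (Theorem~\ref{thm:final}) pinning down the Gaussian.

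Your line ``whose extremal distribution the same concave-envelope argument further certifies to be jointly Gaussian'' is therefore hiding the entire substance of the paper's converse: the rotation lemma, the two-letter directed subadditivity, the convolution-closure lemma, and the CLT step. Conversely, the $n$-letter directed-information factorization you propose for single-letterization is not what the paper does; if carried out, it would amount to an alternative proof of Theorem~\ref{thm:sup} in the Gaussian setting, after which one would \emph{still} need the full doubling/rotation machinery above to Gaussianize the single-letter region.
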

\noindent Note that the region mentioned above is exactly the region without feedback.

\section{Concave Envelopes and Directed Information}\label{conenv}
Let us first review the technique of concave envelopes introduced in \cite{geng2012capacity}. We follow the notation used in that work. Recall that the upper concave envelope of a function $f(x)$ is the smallest concave function $g(x)$ such that $g(x) \geq f(x)$ throughout the domain of $f(x)$. Equivalently, $g(x)$ can be expressed as~\cite{geng2012capacity}
 \begin{equation} \label{conen}
g(x) = \sup_{p(x): \e{X}=x} \e{f(X)}.
\end{equation} 
For a BC $q(\mathbf{y},\mathbf{z}{\vert}\mathbf{x})$, and $\lambda > 1$, define the following function of $p(\mathbf{x})$ (the input distribution)
\begin{equation}
s_{\lambda}^{q}(\mathbf{X}) \defeq I(\mathbf{X};\mathbf{Y}) - {\lambda}I(\mathbf{X};\mathbf{Z}).
\end{equation}
For $(V,\mathbf{X})$  such that $V \to \mathbf{X} \to (\mathbf{Y},\mathbf{Z})$, define
\begin{equation}
s_{\lambda}^{q}(\mathbf{X}{\vert}V) \defeq I(\mathbf{X};\mathbf{Y}{\vert}V) - {\lambda}I(\mathbf{X};\mathbf{Z}{\vert}V).
\end{equation}
Now we denote the upper concave envelope of $s_{\lambda}^{q}(\mathbf{X})$ as
\begin{equation*}
S_{\lambda}^{q}(\mathbf{X}) \defeq \mathfrak{C}(s_{\lambda}^{q}(\mathbf{X})).
\end{equation*}
Applying the definition from Equation \eqref{conen}, we get
\begin{align}
\mathfrak{C}(s_{\lambda}^{q}(\mathbf{X})) &= \sup_{\substack{p(v{\vert}\mathbf{x}):\\ V \to \mathbf{X} \to (\mathbf{Y},\mathbf{Z})}} s_{\lambda}^{q}(\mathbf{X}{\vert}V)\\
&= \sup_{\substack{p(v{\vert}\mathbf{x}):\\ V \to \mathbf{X} \to (\mathbf{Y},\mathbf{Z})}} I(\mathbf{X};\mathbf{Y}{\vert}V) - {\lambda}I(\mathbf{X};\mathbf{Z}{\vert}V).
\end{align}
Define a conditional version of the concave envelope
\begin{equation}
S_{\lambda}^{q}(\mathbf{X}{\vert}W) \defeq \sum_{w} p(w)S_{\lambda}^{q}(\mathbf{X}{\vert}W=w).
\end{equation}
Since $S_{\lambda}^{q}(\mathbf{X})$ is concave in $p(\mathbf{x})$, by Jensen's inequality
\begin{equation}
S_{\lambda}^{q}(\mathbf{X}{\vert}W) \leq S_{\lambda}^{q}(\mathbf{X}).
\end{equation}
For a two-letter broadcast channel $q(\mathbf{y}_{1},\mathbf{z}_{1}{\vert}\mathbf{x}_{1}){\times}q(\mathbf{y}_{2},\mathbf{z}_{2}{\vert}\mathbf{x}_{2})$, define similarly a function of $p(\mathbf{x}_{1},\mathbf{x}_{2})$
\begin{equation*}
s_{\lambda}^{q{\times}q}(\mathbf{X}_{1},\mathbf{X}_{2}) \defeq I(\mathbf{X}_{1},\mathbf{X}_{2};\mathbf{Y}_{1},\mathbf{Y}_{2}) - {\lambda}I(\mathbf{X}_{1},\mathbf{X}_{2};\mathbf{Z}_{1},\mathbf{Z}_{2}).
\end{equation*}
Similar definitions apply for the quantities $s_{\lambda}^{q{\times}q}(\mathbf{X}_{1},\mathbf{X}_{2}{\vert}V)$, $S_{\lambda}^{q{\times}q}(\mathbf{X}_{1},\mathbf{X}_{2})$ and $S_{\lambda}^{q{\times}q}(\mathbf{X}_{1},\mathbf{X}_{2}{\vert}W)$.

\par{The key technique of \cite{geng2012capacity} was to show that the concave envelope defined over the joint distribution $p(\mathbf{x}_{1},\mathbf{x}_{2})$ of a product broadcast channel $q(\mathbf{y}_{1},\mathbf{z}_{1}{\vert}\mathbf{x}_{1}){\times}q(\mathbf{y}_{2},\mathbf{z}_{2}{\vert}\mathbf{x}_{2})$ satisfies the following subadditivity property:
\begin{gather}\label{geng}
S_{\lambda}^{q{\times}q}(\mathbf{X}_{1},\mathbf{X}_{2}) \leq S_{\lambda}^{q}(\mathbf{X}_{1})+S_{\lambda}^{q}(\mathbf{X}_{2}).
\end{gather}
This in turn leads to the optimality of Gaussian auxiliary random variables in establishing the capacity region of a GVBC.}

\par{However, the subadditivity mentioned above may not hold in the presence of feedback. This can be seen by considering an example of a single user channel from $\textbf{X} \to \textbf{Z}$. Consider a two letter channel $p(\textbf{z}_{1}{\vert}\textbf{x}_{1}) \times p(\textbf{z}_{2}{\vert}\textbf{x}_{2})$. Notice that, due to feedback, though $\textbf{X}_{1} \to \textbf{X}_{2} \to \textbf{Z}_{2}$ holds, $\textbf{Z}_{1} \to \textbf{X}_{1} \to \textbf{X}_{2}$ may not. Hence the term $I(\textbf{X}_{2};\textbf{Z}_{1}{\vert}\textbf{X}_{1})$ may not be zero and consequently, the subadditivity $I(\textbf{X}_{1},\textbf{X}_{2};\textbf{Z}_{1},\textbf{Z}_{2}) \leq I(\textbf{X}_{1};\textbf{Z}_{1})+I(\textbf{X}_{2};\textbf{Z}_{2})$ is not guaranteed to hold. Thus the technique of \cite{geng2012capacity} as such does not extend to the case with feedback.}\\
\par{The main contribution of the current work is the formulation of concave envelopes in terms of directed information, which admit a convenient factorization/subadditivity property over product channels. This in turn is used to prove the optimality of Gaussian auxiliary random variables, establishing the capacity region of a PD-GVBC with feedback. Recall that the notion of directed information between two sequences, introduced by Massey is~\cite{massey1990causality} 
\begin{equation}
I(X^{N} \to Y^{N})=\sum_{n=1}^{N} I(X^{n};Y_{n}{\vert}Y^{n-1}).
\end{equation}
Note that $I(X^{N} \to Y^{N}) \neq I(Y^{N} \to X^{N})$ in general. The following lemma relates directed information and mutual information.
\begin{lemma}~\cite{massey1990causality}\label{massey}
If $X^{N}$ is the input and $Y^{N}$ is the output of a channel, then
\begin{equation}
I(X^{N} \to Y^{N}) \leq I(X^{N};Y^{N})
\end{equation}
with equality if and only if the channel is memoryless and used without feedback.
\end{lemma}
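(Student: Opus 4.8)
The plan is to obtain the inequality from a single exact identity produced by the chain rule, and then to read the equality condition off the resulting nonnegative remainder. First I would expand the mutual information along the output coordinates, $I(X^{N};Y^{N}) = \sum_{n=1}^{N} I(X^{N};Y_{n}\mid Y^{n-1})$, and then, splitting the full input block as $X^{N}=(X^{n},X_{n+1}^{N})$, apply the chain rule a second time to peel off the present-and-past inputs:
\begin{equation}
I\big(X^{N};Y_{n}\mid Y^{n-1}\big) = I\big(X^{n};Y_{n}\mid Y^{n-1}\big) + I\big(X_{n+1}^{N};Y_{n}\mid X^{n},Y^{n-1}\big).
\end{equation}
Summing over $n$ and recognizing the definition of directed information yields the exact decomposition
\begin{equation}
I(X^{N};Y^{N}) = I(X^{N}\to Y^{N}) + \sum_{n=1}^{N} I\big(X_{n+1}^{N};Y_{n}\mid X^{n},Y^{n-1}\big).
\end{equation}
Every summand on the right is a conditional mutual information, hence nonnegative, so $I(X^{N}\to Y^{N})\le I(X^{N};Y^{N})$ follows at once, with the slack identified precisely as the sum of the ``anticipation'' terms $I(X_{n+1}^{N};Y_{n}\mid X^{n},Y^{n-1})$.

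For the equality condition I would observe that equality holds if and only if each anticipation term vanishes, i.e. $Y_{n}\perp X_{n+1}^{N}\mid(X^{n},Y^{n-1})$ for every $n$. To tie this to the channel I would invoke the causal factorization produced by sequential use,
\begin{equation}
p(x^{N},y^{N}) = \prod_{n=1}^{N} p\big(x_{n}\mid x^{n-1},y^{n-1}\big)\, p\big(y_{n}\mid x^{n},y^{n-1}\big),
\end{equation}
in which the absence of feedback is the statement $p(x_{n}\mid x^{n-1},y^{n-1})=p(x_{n}\mid x^{n-1})$ and memorylessness is the statement $p(y_{n}\mid x^{n},y^{n-1})=p(y_{n}\mid x_{n})$. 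The sufficiency (``if'') direction is then immediate: under both hypotheses the joint law collapses to $p(x^{N})\prod_{n}p(y_{n}\mid x_{n})$, so $Y_{n}$ depends only on $X_{n}$ and is conditionally independent of $X_{n+1}^{N}$ given $(X^{n},Y^{n-1})$; hence every anticipation term is zero and equality holds.

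I expect the necessity (``only if'') direction to be the main obstacle. A useful device is to re-expand the identical slack by the chain rule along the input coordinates, which gives the equivalent form $\sum_{n=1}^{N} I\big(X_{n};Y^{n-1}\mid X^{n-1}\big)$, the total feedback information; this makes the role of feedback transparent, since the slack vanishes exactly when $p(x_{n}\mid x^{n-1},y^{n-1})=p(x_{n}\mid x^{n-1})$ for all $n$. The delicate point is to recover the full stated condition from the vanishing of the slack alone: one must disentangle, within the causal factorization above, the two conditional-independence structures it encodes---how $X_{n}$ is formed from past outputs, and how $Y_{n}$ depends on past symbols---and show that the memoryless, feedback-free form is forced. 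I would carry this out coordinate by coordinate, combining the nonnegativity of the individual terms with the factorization, which is where the argument demands the most care.
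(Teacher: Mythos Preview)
The paper does not supply its own proof of this lemma; it is quoted verbatim from Massey and left as a citation. So there is nothing in the paper to compare against, and your write-up stands on its own.

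Your derivation of the inequality is correct and is exactly Massey's argument: chain-rule on the output block, peel off $X^{n}$ versus $X_{n+1}^{N}$, and identify the nonnegative remainder $\sum_{n} I(X_{n+1}^{N};Y_{n}\mid X^{n},Y^{n-1})$. Your ``if'' direction for equality is also fine. The alternative expression you mention for the slack, $\sum_{n} I(X_{n};Y^{n-1}\mid X^{n-1})$, is the Massey--Kramer conservation law and is a clean way to see that, under the causal factorization, the slack vanishes exactly when $p(x_{n}\mid x^{n-1},y^{n-1})=p(x_{n}\mid x^{n-1})$ for every $n$, i.e.\ exactly when the channel is used without feedback.

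Where you should stop rather than push harder is the ``only if'' clause as it is phrased in the lemma. Your caution is well placed, but the obstacle is not a missing argument on your side: the memorylessness requirement is not actually necessary for equality. Take any causal channel with memory, $p(y_{n}\mid x^{n},y^{n-1})$, and use it without feedback, so $p(x^{N},y^{N})=p(x^{N})\prod_{n}p(y_{n}\mid x^{n},y^{n-1})$. Then conditioning on $(X^{n},Y^{n-1})$ factors the joint law of $(X_{n+1}^{N},Y_{n})$ as $p(x_{n+1}^{N}\mid x^{n})\,p(y_{n}\mid x^{n},y^{n-1})$, so every anticipation term vanishes and equality holds despite the memory. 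Consequently you will not be able to ``disentangle'' a memorylessness conclusion from the vanishing of the slack, because there is none to extract. Massey's original theorem in fact asserts only the ``if'' direction (no feedback $\Rightarrow$ equality); the precise equality characterization under the causal model is ``no feedback,'' not ``memoryless and no feedback.'' I would state this explicitly rather than attempt the coordinate-by-coordinate program you outline at the end.
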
}

\section{PD-GVBC with feedback}\label{gbcfb}
In this Section, we denote $s_{\lambda}^{q}(\mathbf{X})$ by $\overrightarrow{s}_{\lambda}^{q}(\mathbf{X})$ and $S_{\lambda}^{q}(\mathbf{X})$ by $\overrightarrow{S}_{\lambda}^{q}(\mathbf{X})$ to emphasize the fact that we are now replacing mutual information by directed information. Similarly, for product channels,
\begin{align}
\overrightarrow{s}_{\lambda}^{q{\times}q}(\mathbf{X}_{1},\mathbf{X}_{2}) &\defeq I(\mathbf{X}_{1},\mathbf{X}_{2} \to \mathbf{Y}_{1},\mathbf{Y}_{2})- {\lambda}I(\mathbf{X}_{1},\mathbf{X}_{2} \to \mathbf{Z}_{1},\mathbf{Z}_{2}).
\end{align}

%We will show that the superposition coding inner bound from Theorem \ref{thm:sup}  and the 
%Korner-Marton outer bound\cite{marton} match for the PD Gaussian BC with feedback, thus 
%establishing its capacity region. We first state the Korner-Marton outer bound.
%\begin{defn}\cite{marton}
%\begin{gather*}
%\mathcal{O}^{BC} = \bigcup \left\{(R_1,R_2) \Bigg| \begin{aligned}
%R_{1} &\leq I(X;Y)\\
%R_{2} &\leq I(V;Z)\\
%R_{1}+R_{2} &\leq I(X;Y{\vert}V)+I(V;Z)
%\end{aligned}
%\right\} 
%\end{gather*}
%where the union is over all $p(v,x)$ such that $V \to X \to (Y,Z)$. 
%\end{defn}
%\noindent We will now adapt these bounds to the Gaussian setting. 

\def\OKp{\mathcal{O}_{{{\mathbf{K}}^{'}}}}

\noindent Let the covariance matrix constraint on the input be 
$\e{{\mathbf{X}}{{\mathbf{X}}^{\mathbf{T}}}} {\preccurlyeq} {{\mathbf{K}}^{'}}$.
% and $\mathcal{O}^{BC}$ evaluated under this constraint be $\mathcal{O}_{{{\mathbf{K}}^{'}}}$. 
Let us call the region in Theorem \ref{thm:pdvecfb} as $\mathcal{I}_{{{\mathbf{K}}^{'}}}$. 
Clearly $\mathcal{I}_{{{\mathbf{K}}^{'}}}$ can be achieved by ignoring the feedback and 
using superposition coding. On the other hand, Theorem~\ref{thm:sup} suggests that any achievable
rate pair $(R_1,R_2) \in \OKp$, where 
\begin{equation}\label{okp}
\OKp =  \bigcup \left(I(V;\mathbf{Z}), I(\mathbf{X};\mathbf{Y}|V)\right).
\end{equation}  
Here the union is taken over all $p(v,\mathbf{x})$ such that $p(\mathbf{x})$ meets the covariance constraint 
${{\mathbf{K}}^{'}}$, and  $V \to \mathbf{X} \to (\mathbf{Y},\mathbf{Z})$.
We now focus on showing that ${\mathcal{O}}_{\textbf{K}^{'}} \subseteq {\mathcal{I}}_{\textbf{K}^{'}}$. 
Since  both $\mathcal{I}_{{{\mathbf{K}}^{'}}}$ and $\mathcal{O}_{{{\mathbf{K}}^{'}}}$ are closed convex sets, they can be expressed as an intersection of supporting hyperplanes. Recall that any closed convex set $\mathcal{S}$ belonging to the positive quadrant can be expressed as
\begin{gather}\label{supph1} 
\mathcal{S}=\cap_{{\lambda} \geq 1}\{(R_{1},R_{2}) \in {\mathbb{R}}_{+}^{2}: R_{1}+{\lambda}R_{2} \leq V_{\lambda}\},
\end{gather}
\noindent{where}
\begin{gather}\label{supph2}
V_{\lambda}=\max_{(R_{1},R_{2}) \in \mathcal{S}} R_{1}+{\lambda}R_{2}.
\end{gather}
\noindent Using the characterization in equations \eqref{supph1} and \eqref{supph2}, it suffices to show that
\begin{equation}\label{suphy}
\max_{(R_{1},R_{2}) \in {\mathcal{O}}_{{\textbf{K}^{'}}}} R_{1}+{\lambda}R_{2} \leq \max_{(R_{1},R_{2}) \in {\mathcal{I}}_{{\textbf{K}^{'}}}} R_{1}+{\lambda}R_{2}.
\end{equation}
From the LHS of \eqref{suphy}
\begin{align}
&\max_{(R_{1},R_{2}) \in {\mathcal{O}}_{{\textbf{K}}^{'}}} R_{1}+{\lambda}R_{2} \notag\\
&\leq  \sup_{\textbf{X}: \e{{\textbf{X}}{{\textbf{X}}^{\textbf{T}}}} {\preccurlyeq} {{\textbf{K}}^{'}}} {\lambda}I(V;\textbf{Z})+I(\textbf{X};\textbf{Y}_1|V) \notag\\
&=  \sup_{\textbf{X}: \e{{\textbf{X}}{{\textbf{X}}^{\textbf{T}}}} {\preccurlyeq} {{\textbf{K}}^{'}}} {\lambda}I(X;\textbf{Z})+I(\textbf{X};\textbf{Y}|V)-{\lambda}I(\textbf{X};\textbf{Z}|V) \notag\\
&\leq \sup_{\textbf{X}: \e{{\textbf{X}}{{\textbf{X}}^{\textbf{T}}}} {\preccurlyeq} {{\textbf{K}}^{'}}} {\lambda}I(\textbf{X};\textbf{Z}) + \sup_{\substack{V \to \textbf{X} \to (\textbf{Y},\textbf{Z}) \\ \e{{\textbf{X}}{{\textbf{X}}^{\textbf{T}}}} {\preccurlyeq} {{\textbf{K}}^{'}}}} \overrightarrow{s}_{\lambda}^{q}({\textbf{X}}|V),
\end{align}
where the first inequality follows from the outer bound in expression \eqref{okp} and the last step follows by the definition of $\overrightarrow{s}_{\lambda}^{q}({\textbf{X}}|V)$.
We can now prove Theorem \ref{thm:pdvecfb} by establishing that
\begin{align} \label{equi}
&\sup_{\textbf{X}: \e{{\textbf{X}}{\textbf{X}^{\textbf{T}}}} {\preccurlyeq} {\textbf{K}^{'}}} {\lambda}I(\textbf{X};\textbf{Z}) + \sup_{\substack{V \to \textbf{X} \to (\textbf{Y},\textbf{Z}) \\ {\e{{\textbf{X}}{\textbf{X}^{\textbf{T}}}} {\preccurlyeq} {\textbf{K}^{'}}}}} \overrightarrow{s}_{\lambda}^{q}({\textbf{X}}{\vert}V) \notag\\
& \leq \frac{1}{2} \log\Bigg(\frac{\left\vert{\textbf{B}_{\textbf{1}}+\textbf{K}}\right\vert}{\left\vert{\textbf{K}}\right\vert}\Bigg)+\frac{{\lambda}}{2} \log\Bigg(\frac{\left\vert{\textbf{B}_{\textbf{2}}+\textbf{B}_{\textbf{1}}+\textbf{K}+\widetilde{\textbf{K}}}\right\vert} {\left\vert{\textbf{B}_{\textbf{1}}+\textbf{K}+\widetilde{\textbf{K}}}\right\vert}\Bigg).
\end{align}
\par{We now propose some factorization properties that hold for product DM channels with feedback.}
\begin{prop} \label{factlem}
Consider a DM channel $q(\textbf{y}{\vert}\textbf{x})$ and its two letter extension $q(\textbf{y}_{\textbf{1}}{\vert}\textbf{x}_{\textbf{1}}) \times q(\textbf{y}_{\textbf{2}}{\vert}\textbf{x}_{\textbf{2}})$. Then the following holds
\begin{equation}\label{dmf}
I(\textbf{X}_{\textbf{1}},\textbf{X}_{\textbf{2}} \to \textbf{Y}_{\textbf{1}},\textbf{Y}_{\textbf{2}}) \leq I(\textbf{X}_{\textbf{1}} \to \textbf{Y}_{\textbf{1}})+I(\textbf{X}_{\textbf{2}} \to \textbf{Y}_{\textbf{2}})
\end{equation}
Further, equality is attained when $\textbf{Y}_{\textbf{1}}$ and $\textbf{Y}_{\textbf{2}}$ are independent.
\end{prop}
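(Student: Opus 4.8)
The plan is to reduce the claimed subadditivity to an elementary single-letter mutual-information inequality by unfolding the directed information on the left via Massey's definition. Treating the two ``time'' symbols as the vectors $\mathbf{X}_1$ and $\mathbf{X}_2$, Massey's formula gives
\[
I(\mathbf{X}_1,\mathbf{X}_2 \to \mathbf{Y}_1,\mathbf{Y}_2) = I(\mathbf{X}_1;\mathbf{Y}_1) + I(\mathbf{X}_1,\mathbf{X}_2;\mathbf{Y}_2\,|\,\mathbf{Y}_1),
\]
whereas $I(\mathbf{X}_1 \to \mathbf{Y}_1) = I(\mathbf{X}_1;\mathbf{Y}_1)$ and $I(\mathbf{X}_2 \to \mathbf{Y}_2) = I(\mathbf{X}_2;\mathbf{Y}_2)$, since these are length-one sequences. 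Cancelling $I(\mathbf{X}_1;\mathbf{Y}_1)$, inequality \eqref{dmf} is seen to be equivalent to $I(\mathbf{X}_1,\mathbf{X}_2;\mathbf{Y}_2\,|\,\mathbf{Y}_1) \le I(\mathbf{X}_2;\mathbf{Y}_2)$, and it is the equality condition for this reduced inequality that I will track.

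Next I would exploit the product structure of the channel. The joint law of any feedback code on the two-letter extension factors as $p(\mathbf{x}_1)\,q(\mathbf{y}_1|\mathbf{x}_1)\,p(\mathbf{x}_2|\mathbf{x}_1,\mathbf{y}_1)\,q(\mathbf{y}_2|\mathbf{x}_2)$; in particular $\mathbf{Y}_2$ is produced from $\mathbf{X}_2$ alone through a channel whose noise is independent of $(\mathbf{X}_1,\mathbf{Y}_1)$ — this remains true \emph{even with feedback}, since feedback influences only how $\mathbf{X}_2$ is generated, not the channel acting on it. Hence $(\mathbf{X}_1,\mathbf{Y}_1) \to \mathbf{X}_2 \to \mathbf{Y}_2$ is a Markov chain, which supplies two facts: $I(\mathbf{X}_1;\mathbf{Y}_2\,|\,\mathbf{X}_2,\mathbf{Y}_1)=0$ and $I(\mathbf{Y}_1;\mathbf{Y}_2\,|\,\mathbf{X}_2)=0$. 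The first, via the chain rule $I(\mathbf{X}_1,\mathbf{X}_2;\mathbf{Y}_2|\mathbf{Y}_1) = I(\mathbf{X}_2;\mathbf{Y}_2|\mathbf{Y}_1) + I(\mathbf{X}_1;\mathbf{Y}_2|\mathbf{X}_2,\mathbf{Y}_1)$, collapses the left side to $I(\mathbf{X}_2;\mathbf{Y}_2\,|\,\mathbf{Y}_1)$.

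Finally I would compare $I(\mathbf{X}_2;\mathbf{Y}_2\,|\,\mathbf{Y}_1)$ with $I(\mathbf{X}_2;\mathbf{Y}_2)$ by expanding $I(\mathbf{X}_2,\mathbf{Y}_1;\mathbf{Y}_2)$ in two ways, obtaining
\[
I(\mathbf{X}_2;\mathbf{Y}_2\,|\,\mathbf{Y}_1) - I(\mathbf{X}_2;\mathbf{Y}_2) = I(\mathbf{Y}_1;\mathbf{Y}_2\,|\,\mathbf{X}_2) - I(\mathbf{Y}_1;\mathbf{Y}_2) = -\,I(\mathbf{Y}_1;\mathbf{Y}_2) \le 0,
\]
where the middle equality uses $I(\mathbf{Y}_1;\mathbf{Y}_2\,|\,\mathbf{X}_2)=0$. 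Chaining the three steps proves \eqref{dmf}, and exhibits equality exactly when $I(\mathbf{Y}_1;\mathbf{Y}_2)=0$, i.e.\ when $\mathbf{Y}_1$ and $\mathbf{Y}_2$ are independent, which gives the stated sufficient condition. The only delicate point — and the step I would state most carefully — is the assertion that the memoryless-channel Markov relations persist under feedback; once those two conditional independences are in place, the rest is routine chain-rule bookkeeping, and the identical argument generalizes by induction to the $N$-letter product channel should that be needed elsewhere.
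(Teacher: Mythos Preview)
Your proof is correct and follows essentially the same route as the paper: both expand the two-letter directed information as $I(\mathbf{X}_1;\mathbf{Y}_1)+I(\mathbf{X}_1,\mathbf{X}_2;\mathbf{Y}_2\,|\,\mathbf{Y}_1)$, cancel the first term, and then bound the remainder by $I(\mathbf{X}_2;\mathbf{Y}_2)$ using the memoryless Markov structure and the inequality $h(\mathbf{Y}_2|\mathbf{Y}_1)\le h(\mathbf{Y}_2)$ (equivalently $I(\mathbf{Y}_1;\mathbf{Y}_2)\ge 0$). The only cosmetic difference is that the paper expands into entropies and writes $h(\mathbf{Y}_2|\mathbf{X}_2)=h(\mathbf{N}_2)$ using the additive-noise notation of the surrounding Gaussian model, whereas you phrase the same step as the Markov chain $(\mathbf{X}_1,\mathbf{Y}_1)\to\mathbf{X}_2\to\mathbf{Y}_2$; your formulation is arguably cleaner for a general DM channel, but the argument is the same.
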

\begin{proof}
\begin{align}
&I(\textbf{X}_{\textbf{1}},\textbf{X}_{\textbf{2}} \to \textbf{Y}_{\textbf{1}},\textbf{Y}_{\textbf{2}})-I(\textbf{X}_{\textbf{1}};\textbf{Y}_{\textbf{1}})-I(\textbf{X}_{\textbf{2}};\textbf{Y}_{\textbf{2}}) \notag\\
&=I(\textbf{X}_{\textbf{1}};\textbf{Y}_{\textbf{1}})+I(\textbf{X}_{\textbf{1}},\textbf{X}_{\textbf{2}};\textbf{Y}_{\textbf{2}}{\vert}\textbf{Y}_{\textbf{1}})-I(\textbf{X}_{\textbf{1}};\textbf{Y}_{\textbf{1}})-I(\textbf{X}_{\textbf{2}};\textbf{Y}_{\textbf{2}}) \notag\\
&= I(\textbf{X}_{\textbf{1}},\textbf{X}_{\textbf{2}};\textbf{Y}_{\textbf{2}}{\vert}\textbf{Y}_{\textbf{1}})-I(\textbf{X}_{\textbf{2}};\textbf{Y}_{\textbf{2}}) \notag\\
&= h(\textbf{Y}_{\textbf{2}}{\vert}\textbf{Y}_{\textbf{1}})-h(\textbf{Y}_{\textbf{2}}{\vert}\textbf{Y}_{\textbf{1}},\textbf{X}_{\textbf{1}},\textbf{X}_{\textbf{2}})-h(\textbf{Y}_{\textbf{2}})+h(\textbf{Y}_{\textbf{2}}{\vert}\textbf{X}_{\textbf{2}}) \notag\\
&\leq h(\textbf{Y}_{\textbf{2}})-h(\textbf{N}_{\textbf{2}}{\vert}\textbf{Y}_{\textbf{1}},\textbf{X}_{\textbf{1}},\textbf{X}_{\textbf{2}})-h(\textbf{Y}_{\textbf{2}})+h(\textbf{N}_{\textbf{2}}{\vert}\textbf{X}_{\textbf{2}})= 0 \notag .
\end{align}
The equality is attained when $h(\textbf{Y}_{\textbf{2}}{\vert}\textbf{Y}_{\textbf{1}})=h(\textbf{Y}_{\textbf{2}})$.
\end{proof}
\noindent Thus the factorization inequality holds true for a single user channel with feedback. Now we show the factorization of concave envelopes for a two letter broadcast channel. The following lemma holds for such product broadcast channels.
\begin{prop} \label{lem:2letfact}
For a two letter broadcast channel $q(\textbf{y}_{\textbf{1}},\textbf{z}_{\textbf{1}}{\vert}\textbf{x}_{\textbf{1}}){\times}q(\textbf{y}_{\textbf{2}},\textbf{z}_{\textbf{2}}{\vert}\textbf{x}_{\textbf{2}})$,
\begin{align}\label{2letfact}
\overrightarrow{S}_{\lambda}^{q{\times}q}(\textbf{X}_{\textbf{1}},\textbf{X}_{\textbf{2}}) \leq \overrightarrow{S}_{\lambda}^{q}(\textbf{X}_{\textbf{1}})+\overrightarrow{S}_{\lambda}^{q}(\textbf{X}_{\textbf{2}})
\end{align}
\end{prop}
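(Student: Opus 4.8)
The plan is to carry over the Geng--Nair concave-envelope factorization, but to let directed information absorb the bookkeeping that feedback disrupts. By the definition of the upper concave envelope, $\overrightarrow{S}_{\lambda}^{q\times q}(\mathbf{X}_1,\mathbf{X}_2)=\sup \overrightarrow{s}_{\lambda}^{q\times q}(\mathbf{X}_1,\mathbf{X}_2\vert V)$, the supremum being over all $V$ with $V\to(\mathbf{X}_1,\mathbf{X}_2)\to(\mathbf{Y}_1,\mathbf{Z}_1,\mathbf{Y}_2,\mathbf{Z}_2)$. Hence it suffices to fix one such $V$ and prove $\overrightarrow{s}_{\lambda}^{q\times q}(\mathbf{X}_1,\mathbf{X}_2\vert V)\le\overrightarrow{S}_{\lambda}^{q}(\mathbf{X}_1)+\overrightarrow{S}_{\lambda}^{q}(\mathbf{X}_2)$; the marginals $p(\mathbf{x}_1)$ and $p(\mathbf{x}_2)$ feeding the right-hand side stay fixed as $V$ varies, so taking the supremum over $V$ on the left yields the proposition.

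First I would apply the chain rule to each conditional directed information, $I(\mathbf{X}_1,\mathbf{X}_2\to\mathbf{Y}_1,\mathbf{Y}_2\vert V)=I(\mathbf{X}_1;\mathbf{Y}_1\vert V)+I(\mathbf{X}_1,\mathbf{X}_2;\mathbf{Y}_2\vert \mathbf{Y}_1,V)$ and likewise for the $\mathbf{Z}$-sequence with $\mathbf{Z}_1$ replacing $\mathbf{Y}_1$, and regroup $\overrightarrow{s}_{\lambda}^{q\times q}(\mathbf{X}_1,\mathbf{X}_2\vert V)$ as the sum of a ``letter-$1$'' term $I(\mathbf{X}_1;\mathbf{Y}_1\vert V)-\lambda I(\mathbf{X}_1;\mathbf{Z}_1\vert V)$ and a ``letter-$2$'' term $I(\mathbf{X}_1,\mathbf{X}_2;\mathbf{Y}_2\vert \mathbf{Y}_1,V)-\lambda I(\mathbf{X}_1,\mathbf{X}_2;\mathbf{Z}_2\vert \mathbf{Z}_1,V)$. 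The letter-$1$ term is exactly $s_{\lambda}^{q}(\mathbf{X}_1\vert V)=\overrightarrow{s}_{\lambda}^{q}(\mathbf{X}_1\vert V)$ (a single letter carries no feedback); since $V\to\mathbf{X}_1\to(\mathbf{Y}_1,\mathbf{Z}_1)$ and $\overrightarrow{S}_{\lambda}^{q}$ is concave in the input law, it is at most $S_{\lambda}^{q}(\mathbf{X}_1\vert V)\le S_{\lambda}^{q}(\mathbf{X}_1)=\overrightarrow{S}_{\lambda}^{q}(\mathbf{X}_1)$ by Jensen.

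The work is in the letter-$2$ term, and I expect the conditioning mismatch there to be the main obstacle. Because the product channel is memoryless, $(\mathbf{Y}_2,\mathbf{Z}_2)$ is conditionally independent of $(\mathbf{X}_1,\mathbf{Y}_1,\mathbf{Z}_1,V)$ given $\mathbf{X}_2$, so $\mathbf{X}_1$ drops out of both mutual informations and the term becomes $I(\mathbf{X}_2;\mathbf{Y}_2\vert \mathbf{Y}_1,V)-\lambda I(\mathbf{X}_2;\mathbf{Z}_2\vert \mathbf{Z}_1,V)$ — one piece conditioned on $\mathbf{Y}_1$, the other on $\mathbf{Z}_1$, which is precisely the feedback obstruction that breaks the mutual-information argument. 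I would repair it using physical degradedness of the first link: the same conditional independence gives $H(\mathbf{Z}_2\vert \mathbf{X}_2,\mathbf{Z}_1,V)=H(\mathbf{Z}_2\vert \mathbf{X}_2)=H(\mathbf{Z}_2\vert \mathbf{X}_2,\mathbf{Y}_1,V)$, and since $\mathbf{Z}_1$ is conditionally independent of $(\mathbf{Z}_2,V)$ given $\mathbf{Y}_1$ we get $I(\mathbf{Z}_2;\mathbf{Z}_1\vert V)\le I(\mathbf{Z}_2;\mathbf{Y}_1\vert V)$, i.e. $H(\mathbf{Z}_2\vert \mathbf{Z}_1,V)\ge H(\mathbf{Z}_2\vert \mathbf{Y}_1,V)$; together these give $I(\mathbf{X}_2;\mathbf{Z}_2\vert \mathbf{Z}_1,V)\ge I(\mathbf{X}_2;\mathbf{Z}_2\vert \mathbf{Y}_1,V)$. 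Thus the letter-$2$ term is at most $I(\mathbf{X}_2;\mathbf{Y}_2\vert W)-\lambda I(\mathbf{X}_2;\mathbf{Z}_2\vert W)=s_{\lambda}^{q}(\mathbf{X}_2\vert W)$ with the single auxiliary $W\defeq(\mathbf{Y}_1,V)$, which satisfies $W\to\mathbf{X}_2\to(\mathbf{Y}_2,\mathbf{Z}_2)$, and by Jensen this is $\le S_{\lambda}^{q}(\mathbf{X}_2\vert W)\le S_{\lambda}^{q}(\mathbf{X}_2)=\overrightarrow{S}_{\lambda}^{q}(\mathbf{X}_2)$, the last input law being the marginal $p(\mathbf{x}_2)$. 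Adding the two bounds and taking the supremum over $V$ completes the proof. The only ingredient beyond Geng--Nair is the degradedness step that aligns the $\mathbf{Y}_1$-versus-$\mathbf{Z}_1$ conditioning, and it is available precisely because the directed-information chain rule exposes the pairs $I(\mathbf{X}_2;\mathbf{Y}_2\vert \mathbf{Y}_1,V)$ and $I(\mathbf{X}_2;\mathbf{Z}_2\vert \mathbf{Z}_1,V)$ rather than the unbounded feedback term $I(\mathbf{X}_2;\mathbf{Z}_1\vert \mathbf{X}_1)$ flagged in Section \ref{conenv}.
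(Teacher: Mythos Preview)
Your proof is correct and follows essentially the same route as the paper's: directed-information chain rule, bounding the letter-$1$ block by $\overrightarrow{S}_{\lambda}^{q}(\mathbf{X}_1)$, dropping $\mathbf{X}_1$ from the letter-$2$ block via the memoryless Markov chain, using physical degradedness of $\mathbf{Z}_1$ from $\mathbf{Y}_1$ to reconcile the mismatched conditioning, and finishing with the concave-envelope/Jensen step. The only cosmetic difference is the direction in which you resolve the mismatch---you replace $I(\mathbf{X}_2;\mathbf{Z}_2\vert V,\mathbf{Z}_1)$ by the smaller $I(\mathbf{X}_2;\mathbf{Z}_2\vert V,\mathbf{Y}_1)$ and take $W=(V,\mathbf{Y}_1)$, whereas the paper replaces $I(\mathbf{X}_2;\mathbf{Y}_2\vert V,\mathbf{Y}_1)$ by the larger $I(\mathbf{X}_2;\mathbf{Y}_2\vert V,\mathbf{Z}_1)$ and takes $W=(V,\mathbf{Z}_1)$; both are equivalent uses of the same degradedness relation.
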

\begin{proof}
For any $(V,\textbf{X}_{\textbf{1}},\textbf{X}_{\textbf{2}})$ such that $V \to (\textbf{X}_{\textbf{1}},\textbf{X}_{\textbf{2}}) \to (\textbf{Y}_{\textbf{1}},\textbf{Y}_{\textbf{2}},\textbf{Z}_{\textbf{1}},\textbf{Z}_{\textbf{2}})$, we have
\begin{align}
&\overrightarrow{s}_{\lambda}^{q{\times}q}(\textbf{X}_{\textbf{1}},\textbf{X}_{\textbf{2}}{\vert}V) \notag\\
&=I(\textbf{X}_{\textbf{1}},\textbf{X}_{\textbf{2}} \to \textbf{Y}_{\textbf{1}},\textbf{Y}_{\textbf{2}}{\vert}V) - {\lambda}I(\textbf{X}_{\textbf{1}},\textbf{X}_{\textbf{2}} \to \textbf{Z}_{\textbf{1}},\textbf{Z}_{\textbf{2}}{\vert}V) \notag\\
&= I(\textbf{X}_{\textbf{1}};\textbf{Y}_{\textbf{1}}{\vert}V)+I(\textbf{X}_{\textbf{1}},\textbf{X}_{\textbf{2}};\textbf{Y}_{\textbf{2}}{\vert}V,\textbf{Y}_{\textbf{1}})-{\lambda}I(\textbf{X}_{\textbf{1}};\textbf{Z}_{\textbf{1}}{\vert}V) -{\lambda}I(\textbf{X}_{\textbf{1}},\textbf{X}_{\textbf{2}};\textbf{Z}_{\textbf{2}}{\vert}V,\textbf{Z}_{\textbf{1}}) \notag\\
&\stackrel{(a)} {\leq} \overrightarrow{S}_{\lambda}^{q}(\textbf{X}_{\textbf{1}})+I(\textbf{X}_{\textbf{1}};\textbf{Y}_{\textbf{2}}{\vert}V,\textbf{Y}_{\textbf{1}},\textbf{X}_{\textbf{2}})+I(\textbf{X}_{\textbf{2}};\textbf{Y}_{\textbf{2}}{\vert}V,\textbf{Y}_{\textbf{1}})-{\lambda}I(\textbf{X}_{\textbf{1}};\textbf{Z}_{\textbf{2}}{\vert}V,\textbf{Z}_{\textbf{1}},\textbf{X}_{\textbf{2}})-{\lambda}I(\textbf{X}_{\textbf{2}};\textbf{Z}_{\textbf{2}}{\vert}V,\textbf{Z}_{\textbf{1}}) \notag\\
&\stackrel{(b)}= \overrightarrow{S}_{\lambda}^{q}(\textbf{X}_{\textbf{1}})+I(\textbf{X}_{\textbf{2}};\textbf{Y}_{\textbf{2}}{\vert}V,\textbf{Y}_{\textbf{1}})-{\lambda}I(\textbf{X}_{\textbf{2}};\textbf{Z}_{\textbf{2}}{\vert}V,\textbf{Z}_{\textbf{1}}) \notag\\
&\stackrel{(c)}{\leq} \overrightarrow{S}_{\lambda}^{q}(\textbf{X}_{\textbf{1}})+I(\textbf{X}_{\textbf{2}};\textbf{Y}_{\textbf{2}}{\vert}V,\textbf{Z}_{\textbf{1}})-{\lambda}I(\textbf{X}_{\textbf{2}};\textbf{Z}_{\textbf{2}}{\vert}V,\textbf{Z}_{\textbf{1}}) \notag\\
&\stackrel{(d)}{\leq} \overrightarrow{S}_{\lambda}^{q}(\textbf{X}_{\textbf{1}})+\overrightarrow{S}_{\lambda}^{q}(\textbf{X}_{\textbf{2}}{\vert}\textbf{Z}_{\textbf{1}}) \stackrel{(e)}{\leq} \overrightarrow{S}_{\lambda}^{q}(\textbf{X}_{\textbf{1}})+\overrightarrow{S}_{\lambda}^{q}(\textbf{X}_{\textbf{2}}) \notag,
\end{align}
where (a) and (d) follow from the definition of concave envelope, (b) follows from the Markov chain $\textbf{X}_{\textbf{1}} \to \textbf{X}_{\textbf{2}} \to (\textbf{Y}_{\textbf{2}},\textbf{Z}_{\textbf{2}})$, (c) follows from the fact that $\textbf{Z}_{\textbf{1}}$ is a degraded version of $\textbf{Y}_{\textbf{1}}$ and (e) follows from concavity. Now optimizing over all distributions $p(v|\mathbf{x}_1,\mathbf{x}_2)$ completes the proof.
\end{proof}
\noindent Thus we have shown that the concave envelopes in terms of directed information satisfy the factorization inequality \eqref{dmf}, \eqref{2letfact}. This in turn will be used to prove the optimality of Gaussian auxiliaries in achieving the capacity region of a PD-GVBC with feedback.\\
\par{We also require the following property regarding unitary transformations on a two-letter Gaussian channel.}
\begin{prop} \label{lem:rot}
Consider a two letter Gaussian channel with feedback
\begin{gather}
\textbf{Y}_{\textbf{1}}=\textbf{X}_{\textbf{1}}+\textbf{N}_{\textbf{1}}\\
\textbf{Y}_{\textbf{2}}=\textbf{X}_{\textbf{2}}+\textbf{N}_{\textbf{2}},
\end{gather}
\noindent{where $\textbf{N}_{\textbf{1}}$ and $\textbf{N}_{\textbf{2}}$ are independent and distributed as $\mathcal{N}(\textbf{0},{{\textbf{K}}^{'}})$. Define}
\begin{gather}
\textbf{U}_{\textbf{1}}=\frac{1}{\sqrt{2}}(\textbf{X}_{\textbf{1}}+\textbf{X}_{\textbf{2}}), \textbf{U}_{\textbf{2}}=\frac{1}{\sqrt{2}}(\textbf{X}_{\textbf{1}}-\textbf{X}_{\textbf{2}})\\
\textbf{V}_{\textbf{1}}=\frac{1}{\sqrt{2}}(\textbf{Y}_{\textbf{1}}+\textbf{Y}_{\textbf{2}}), \textbf{V}_{\textbf{2}}=\frac{1}{\sqrt{2}}(\textbf{Y}_{\textbf{1}}-\textbf{Y}_{\textbf{2}})
\end{gather}
Then $I(\textbf{X}_{\textbf{1}},\textbf{X}_{\textbf{2}} \to \textbf{Y}_{\textbf{1}},\textbf{Y}_{\textbf{2}}) = I(\textbf{U}_{\textbf{1}},\textbf{U}_{\textbf{2}} \to \textbf{V}_{\textbf{1}},\textbf{V}_{\textbf{2}})$.
\end{prop}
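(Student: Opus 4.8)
The plan is to show that both directed informations equal the \emph{same} explicit quantity --- the joint differential entropy of the vector outputs minus the sum of the two noise entropies --- using only the chain rule for directed information and the fact that the prescribed transformation is orthogonal, hence entropy preserving.

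First I would expand the left-hand side by the chain rule for directed information over the two channel uses,
\begin{align}
I(\textbf{X}_{\textbf{1}},\textbf{X}_{\textbf{2}}\to\textbf{Y}_{\textbf{1}},\textbf{Y}_{\textbf{2}})
&= I(\textbf{X}_{\textbf{1}};\textbf{Y}_{\textbf{1}})+I(\textbf{X}_{\textbf{1}},\textbf{X}_{\textbf{2}};\textbf{Y}_{\textbf{2}}{\vert}\textbf{Y}_{\textbf{1}})\notag\\
&= h(\textbf{Y}_{\textbf{1}},\textbf{Y}_{\textbf{2}})-h(\textbf{Y}_{\textbf{1}}{\vert}\textbf{X}_{\textbf{1}})-h(\textbf{Y}_{\textbf{2}}{\vert}\textbf{Y}_{\textbf{1}},\textbf{X}_{\textbf{1}},\textbf{X}_{\textbf{2}}).\notag
\end{align}
Since $\textbf{Y}_{\textbf{i}}=\textbf{X}_{\textbf{i}}+\textbf{N}_{\textbf{i}}$, the channel noise is independent of the inputs, and $\textbf{N}_{\textbf{1}}\perp\textbf{N}_{\textbf{2}}$, the last two conditional entropies reduce to $h(\textbf{N}_{\textbf{1}})$ and $h(\textbf{N}_{\textbf{2}})$ respectively: the first since $\textbf{N}_{\textbf{1}}\perp\textbf{X}_{\textbf{1}}$, the second since $\textbf{N}_{\textbf{2}}$ is independent of $(\textbf{X}_{\textbf{1}},\textbf{X}_{\textbf{2}},\textbf{N}_{\textbf{1}})$ and hence of $(\textbf{Y}_{\textbf{1}},\textbf{X}_{\textbf{1}},\textbf{X}_{\textbf{2}})$. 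Thus $I(\textbf{X}_{\textbf{1}},\textbf{X}_{\textbf{2}}\to\textbf{Y}_{\textbf{1}},\textbf{Y}_{\textbf{2}})=h(\textbf{Y}_{\textbf{1}},\textbf{Y}_{\textbf{2}})-h(\textbf{N}_{\textbf{1}})-h(\textbf{N}_{\textbf{2}})$.

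Next I would set $\textbf{M}_{\textbf{1}}=\frac{1}{\sqrt{2}}(\textbf{N}_{\textbf{1}}+\textbf{N}_{\textbf{2}})$ and $\textbf{M}_{\textbf{2}}=\frac{1}{\sqrt{2}}(\textbf{N}_{\textbf{1}}-\textbf{N}_{\textbf{2}})$, and record three facts that put the right-hand side into the same mould. (i) $\textbf{V}_{\textbf{i}}=\textbf{U}_{\textbf{i}}+\textbf{M}_{\textbf{i}}$. (ii) $(\textbf{M}_{\textbf{1}},\textbf{M}_{\textbf{2}})$ is again a pair of \emph{independent} $\mathcal{N}(\textbf{0},\textbf{K}')$ vectors --- they are jointly Gaussian, their cross-covariance $\frac{1}{2}(\textbf{K}'-\textbf{K}')$ vanishes, and each covariance equals $\frac{1}{2}(\textbf{K}'+\textbf{K}')=\textbf{K}'$ --- and, being a linear image of $(\textbf{N}_{\textbf{1}},\textbf{N}_{\textbf{2}})$, it is independent of $(\textbf{U}_{\textbf{1}},\textbf{U}_{\textbf{2}})$. (iii) The map carrying $(\textbf{X}_{\textbf{1}},\textbf{X}_{\textbf{2}})$, $(\textbf{Y}_{\textbf{1}},\textbf{Y}_{\textbf{2}})$ and $(\textbf{N}_{\textbf{1}},\textbf{N}_{\textbf{2}})$ to $(\textbf{U}_{\textbf{1}},\textbf{U}_{\textbf{2}})$, $(\textbf{V}_{\textbf{1}},\textbf{V}_{\textbf{2}})$ and $(\textbf{M}_{\textbf{1}},\textbf{M}_{\textbf{2}})$ is one and the same block matrix $\textbf{H}$ with $\textbf{H}\textbf{H}^{\textbf{T}}=\textbf{I}$, so $|\det\textbf{H}|=1$ and differential entropy is invariant under it. Applying the identical chain-rule expansion to $I(\textbf{U}_{\textbf{1}},\textbf{U}_{\textbf{2}}\to\textbf{V}_{\textbf{1}},\textbf{V}_{\textbf{2}})$ and invoking (i)--(ii) exactly as above yields $h(\textbf{V}_{\textbf{1}},\textbf{V}_{\textbf{2}})-h(\textbf{M}_{\textbf{1}})-h(\textbf{M}_{\textbf{2}})$; then (iii) gives $h(\textbf{V}_{\textbf{1}},\textbf{V}_{\textbf{2}})=h(\textbf{Y}_{\textbf{1}},\textbf{Y}_{\textbf{2}})$ and (ii) gives $h(\textbf{M}_{\textbf{i}})=h(\textbf{N}_{\textbf{i}})$, so the two directed informations coincide.

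The one step that needs care --- because directed information, unlike mutual information, is sensitive to the time ordering and is not automatically preserved by the given (time-mixing) change of variables --- is the reduction of each directed information to a difference of \emph{unconditioned} noise entropies. Concretely one must verify the collapse $h(\textbf{Y}_{\textbf{2}}{\vert}\textbf{Y}_{\textbf{1}},\textbf{X}_{\textbf{1}},\textbf{X}_{\textbf{2}})=h(\textbf{N}_{\textbf{2}})$ together with its rotated counterpart $h(\textbf{V}_{\textbf{2}}{\vert}\textbf{V}_{\textbf{1}},\textbf{U}_{\textbf{1}},\textbf{U}_{\textbf{2}})=h(\textbf{M}_{\textbf{2}})$: because a directed-information term conditions a current output only on \emph{past} inputs and outputs, the fresh noise stays independent of everything being conditioned on, and the two resulting differences are structurally identical before and after the rotation. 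Finally I would note that the whole argument goes through verbatim when an auxiliary $V$ is carried along --- one simply conditions every entropy above on $V$, and the Markov chain $V\to(\textbf{X}_{\textbf{1}},\textbf{X}_{\textbf{2}})\to(\textbf{Y}_{\textbf{1}},\textbf{Y}_{\textbf{2}})$ keeps the noise independent of $(V,\textbf{X}_{\textbf{1}},\textbf{X}_{\textbf{2}})$ --- which is the form in which this identity is invoked in the sequel.
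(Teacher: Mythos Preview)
Your proof is correct and follows essentially the same route as the paper's: both expand each directed information via the chain rule into $h(\text{outputs})-h(\text{noise}_1)-h(\text{noise}_2)$, then use that the rotation is orthogonal (so $h(\textbf{Y}_{\textbf{1}},\textbf{Y}_{\textbf{2}})=h(\textbf{V}_{\textbf{1}},\textbf{V}_{\textbf{2}})$) and that the rotated noises $(\textbf{M}_{\textbf{1}},\textbf{M}_{\textbf{2}})$ are again i.i.d.\ $\mathcal{N}(\textbf{0},\textbf{K}')$. The only cosmetic difference is that the paper subtracts the two directed informations and shows the difference is zero, whereas you evaluate each side separately and match them.
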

\begin{proof}
Take $\textbf{W}_{\textbf{1}} \texttt{=}\frac{1}{\sqrt{2}}(\textbf{N}_{\textbf{1}}+\textbf{N}_{\textbf{2}})$ and $\textbf{W}_{\textbf{2}}\texttt{=}\frac{1}{\sqrt{2}}(\textbf{N}_{\textbf{1}}-\textbf{N}_{\textbf{2}})$.
\begin{align}
&I(\textbf{X}_{\textbf{1}},\textbf{X}_{\textbf{2}} \to \textbf{Y}_{\textbf{1}},\textbf{Y}_{\textbf{2}}) - I(\textbf{U}_{\textbf{1}},\textbf{U}_{\textbf{2}} \to \textbf{V}_{\textbf{1}},\textbf{V}_{\textbf{2}}) \notag\\
&= h({\textbf{Y}}_{\textbf{1}},{\textbf{Y}}_{\textbf{2}})-h({\textbf{Y}}_{\textbf{1}}{\vert}{\textbf{X}}_{\textbf{1}})-h({\textbf{Y}}_{\textbf{2}}{\vert}{\textbf{X}}_{\textbf{1}},{\textbf{X}}_{\textbf{2}},{\textbf{Y}}_{\textbf{1}})-h({\textbf{V}}_{\textbf{1}},{\textbf{V}}_{\textbf{2}})+h(\textbf{V}_{\textbf{1}}{\vert}\textbf{U}_{\textbf{1}})+h(\textbf{V}_{\textbf{2}}{\vert}\textbf{U}_{\textbf{1}},\textbf{U}_{\textbf{2}},\textbf{V}_{\textbf{1}}) \notag\\
&\stackrel{(a)}=-h(\textbf{N}_{\textbf{1}})-h(\textbf{N}_{\textbf{2}}{\vert}\textbf{X}_{\textbf{1}},\textbf{X}_{\textbf{2}},\textbf{Y}_{\textbf{1}})+h(\textbf{W}_{\textbf{1}})+h(\textbf{W}_{\textbf{2}}{\vert}\textbf{U}_{\textbf{1}},\textbf{U}_{\textbf{2}},\textbf{V}_{\textbf{1}}) \notag\\
&\stackrel{(b)}=-h(\textbf{N}_{\textbf{2}})+h(\textbf{W}_{\textbf{2}}) \stackrel{(c)}=0,
\end{align}
where (a), (b) and (c) follow since, given $\textbf{X}_{\textbf{2}}$, the uncertainty in $\textbf{Y}_{\textbf{2}}$ is only due to $\textbf{N}_{\textbf{2}}$ and, for Gaussian noise channels, $\textbf{W}_{\textbf{1}}$ and $\textbf{W}_{\textbf{2}}$ are i.i.d. with the same distribution as $\textbf{N}_{\textbf{1}}$.
\end{proof}
\noindent Now define
\begin{align}
V_{\lambda}^{q}({\textbf{K}^{'}})&= \sup_{\textbf{X}: \e{{\textbf{X}}{\textbf{X}^{\textbf{T}}}} {\preccurlyeq} {\textbf{K}^{'}}} \overrightarrow{S}_{\lambda}^{q}(\textbf{X}) \notag\\
&= \sup_{\substack{(V,\textbf{X}): \e{{\textbf{X}}{\textbf{X}^{\textbf{T}}}} {\preccurlyeq} {\textbf{K}^{'}}\\ V \to \textbf{X} \to (\textbf{Y},\textbf{Z})}} \overrightarrow{s}_{\lambda}^{q}({\textbf{X}}{\vert}V). \label{eqnmax}
\end{align}
By Proposition $7$ of \cite{geng2012capacity}, there exists a pair of random variables $(V^{\dagger},\textbf{X}^{\dagger})$ with $\left\vert{{\mathcal{V}}^{\dagger}}\right\vert \leq 1+\frac{d(d+1)}{2}$ and \smash{$\e{{\textbf{X}}^{\dagger}{{\textbf{X}^{\dagger}}^{\textbf{T}}}} {\preccurlyeq} {\textbf{K}^{'}}$} such that (with $d$ being the dimension of all the vectors)
\begin{equation}
V_{\lambda}^{q}({\textbf{K}^{'}})= s_{\lambda}^{q}(\textbf{X}^{\dagger}{\vert}V^{\dagger})=\overrightarrow{s}_{\lambda}^{q}(\textbf{X}^{\dagger}{\vert}V^{\dagger}).
\end{equation}
\noindent The following lemma helps in identifying the optimal distribution that achieves $V_{\lambda}^{q}({\textbf{K}^{'}})$.
\begin{lemma} \label{lem:final}
Let $(V^{\dagger},\textbf{X}^{\dagger}) \sim p^{\dagger}(v,\textbf{x})$ attain $V_{\lambda}^{q}({\textbf{K}^{'}})$. Let $\textbf{X}_{\textbf{v}}$ be distributed according to the conditional law $p^{\dagger}(\textbf{X}{\vert}V=v)$. Let $(V_{1},V_{2},\textbf{X}_{\textbf{1}},\textbf{X}_{\textbf{2}}) \sim p(v_{1},v_{2},\textbf{x}_{\textbf{1}},\textbf{x}_{\textbf{2}})$ be such that the marginals  $p^{\dagger}(v_{1},\textbf{x}_{\textbf{1}})$ and $p^{\dagger}(v_{2},\textbf{x}_{\textbf{2}})$ attain $V_{\lambda}^{q}({\textbf{K}^{'}})$. Define
\begin{gather}
\textbf{U}_{\textbf{1}}{\vert}((V_{1},V_{2})=(v_{1},v_{2})) \sim \frac{1}{\sqrt{2}}(\textbf{X}_{\textbf{v}_{\textbf{1}}}+\textbf{X}_{\textbf{v}_{\textbf{2}}})\\
\textbf{U}_{\textbf{2}}{\vert}((V_{1},V_{2})=(v_{1},v_{2})) \sim \frac{1}{\sqrt{2}}(\textbf{X}_{\textbf{v}_{\textbf{1}}}-\textbf{X}_{\textbf{v}_{\textbf{2}}}).
\end{gather} 
Then $\textbf{U}_{\textbf{1}}$ and $\textbf{U}_{\textbf{2}}$ attain $V_{\lambda}^{q}({\textbf{K}^{'}})$.
\end{lemma}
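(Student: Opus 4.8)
The plan is to adapt the doubling-and-rotation device underlying \cite{geng2012capacity}, replacing the mutual-information subadditivity it relies on (which fails under feedback) by the directed-information factorization of Propositions~\ref{factlem}--\ref{lem:2letfact} together with the rotation invariance of Proposition~\ref{lem:rot}. Throughout write $W=(V_{1},V_{2})$ and take $(V_{1},\mathbf{X}_{\mathbf{v}_{1}})$ and $(V_{2},\mathbf{X}_{\mathbf{v}_{2}})$ to be two independent copies of $(V^{\dagger},\mathbf{X}^{\dagger})$, so that $\mathbf{U}_{1}=\frac{1}{\sqrt{2}}(\mathbf{X}_{\mathbf{v}_{1}}+\mathbf{X}_{\mathbf{v}_{2}})$ and $\mathbf{U}_{2}=\frac{1}{\sqrt{2}}(\mathbf{X}_{\mathbf{v}_{1}}-\mathbf{X}_{\mathbf{v}_{2}})$ arise from an orthogonal transformation of the inputs of the product channel $q{\times}q$ with auxiliary $W$. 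Since the two copies are independent and all vectors are zero mean, a short computation gives $\e{\mathbf{U}_{i}\mathbf{U}_{i}^{\mathbf{T}}}=\e{\mathbf{X}^{\dagger}(\mathbf{X}^{\dagger})^{\mathbf{T}}}\preccurlyeq\mathbf{K}'$ for $i=1,2$, so each $\mathbf{U}_{i}$ is a feasible input; combining $\overrightarrow{s}_{\lambda}^{q}(\mathbf{U}_{i}|W)\le\overrightarrow{S}_{\lambda}^{q}(\mathbf{U}_{i}|W)$ with Jensen's inequality and \eqref{eqnmax} then gives $\overrightarrow{s}_{\lambda}^{q}(\mathbf{U}_{i}|W)\le V_{\lambda}^{q}(\mathbf{K}')$. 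The content of the lemma is therefore the matching lower bound.

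For that, I would first observe that conditioned on $W=(v_{1},v_{2})$ the two blocks of $q{\times}q$ are independent, so $\overrightarrow{s}_{\lambda}^{q{\times}q}(\mathbf{X}_{\mathbf{v}_{1}},\mathbf{X}_{\mathbf{v}_{2}}|W)=\overrightarrow{s}_{\lambda}^{q}(\mathbf{X}_{\mathbf{v}_{1}}|V_{1})+\overrightarrow{s}_{\lambda}^{q}(\mathbf{X}_{\mathbf{v}_{2}}|V_{2})=2V_{\lambda}^{q}(\mathbf{K}')$, using that both marginals attain $V_{\lambda}^{q}(\mathbf{K}')$. Then I would apply Proposition~\ref{lem:rot} conditionally on each $(v_{1},v_{2})$ --- once to the additive channel $\mathbf{X}\to\mathbf{Y}$ with noise covariance $\mathbf{K}$ and once to $\mathbf{X}\to\mathbf{Z}$ with noise covariance $\mathbf{K}+\widetilde{\mathbf{K}}$ --- noting that the rotated noises are again mutually independent Gaussians with these covariances, so the rotated outputs $(\mathbf{V}_{1}^{Y},\mathbf{V}_{2}^{Y})$ and $(\mathbf{V}_{1}^{Z},\mathbf{V}_{2}^{Z})$ form a product physically degraded Gaussian BC with the same law as $q{\times}q$. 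This gives $\overrightarrow{s}_{\lambda}^{q{\times}q}(\mathbf{U}_{1},\mathbf{U}_{2}|W)=\overrightarrow{s}_{\lambda}^{q{\times}q}(\mathbf{X}_{\mathbf{v}_{1}},\mathbf{X}_{\mathbf{v}_{2}}|W)=2V_{\lambda}^{q}(\mathbf{K}')$.

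Next I would re-run the chain in the proof of Proposition~\ref{lem:2letfact} on $\overrightarrow{s}_{\lambda}^{q{\times}q}(\mathbf{U}_{1},\mathbf{U}_{2}|W)$, but in step~(a) keep the first-letter term $\overrightarrow{s}_{\lambda}^{q}(\mathbf{U}_{1}|W)$ intact instead of bounding it by its concave envelope. Peeling the first letter off the directed information and then using the Markov chain $\mathbf{U}_{1}\to\mathbf{U}_{2}\to(\mathbf{V}_{2}^{Y},\mathbf{V}_{2}^{Z})$, the degradedness of $\mathbf{V}_{1}^{Z}$ relative to $\mathbf{V}_{1}^{Y}$, and concavity --- exactly as in steps (b)--(e) there --- would yield $\overrightarrow{s}_{\lambda}^{q{\times}q}(\mathbf{U}_{1},\mathbf{U}_{2}|W)\le\overrightarrow{s}_{\lambda}^{q}(\mathbf{U}_{1}|W)+\overrightarrow{S}_{\lambda}^{q}(\mathbf{U}_{2})\le\overrightarrow{s}_{\lambda}^{q}(\mathbf{U}_{1}|W)+V_{\lambda}^{q}(\mathbf{K}')$, the last step by feasibility of $\mathbf{U}_{2}$. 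Comparing with the preceding paragraph forces $\overrightarrow{s}_{\lambda}^{q}(\mathbf{U}_{1}|W)\ge V_{\lambda}^{q}(\mathbf{K}')$, hence equality, so $(\mathbf{U}_{1},W)$ attains $V_{\lambda}^{q}(\mathbf{K}')$. The orthogonal transformation that instead lists $\mathbf{U}_{2}$ first is covered verbatim by the proof of Proposition~\ref{lem:rot}, so the identical argument with the two letters interchanged gives $\overrightarrow{s}_{\lambda}^{q}(\mathbf{U}_{2}|W)=V_{\lambda}^{q}(\mathbf{K}')$, which completes the argument.

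The step I expect to require the most care is the bookkeeping in the rotation step: one must verify that passing to the orthogonal transform and enlarging the auxiliary to $W=(V_{1},V_{2})$ leaves the two-letter channel a product \emph{physically degraded} Gaussian BC with mutually independent noises, since this is exactly what makes Proposition~\ref{lem:rot} applicable and what preserves, in the factorization step, the Markov chain $\mathbf{U}_{1}\to\mathbf{U}_{2}\to(\mathbf{V}_{2}^{Y},\mathbf{V}_{2}^{Z})$ and the degradedness of $\mathbf{V}_{1}^{Z}$ relative to $\mathbf{V}_{1}^{Y}$ used in step~(c) of Proposition~\ref{lem:2letfact}. Once that structure is checked, the conclusion is the short sandwich $V_{\lambda}^{q}(\mathbf{K}')\le\overrightarrow{s}_{\lambda}^{q}(\mathbf{U}_{i}|W)\le V_{\lambda}^{q}(\mathbf{K}')$ for $i=1,2$.
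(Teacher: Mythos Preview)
Your proposal is correct and follows essentially the same sandwich as the paper: two independent copies of the optimizer, rotation invariance (Property~\ref{lem:rot}), then the two-letter factorization (Property~\ref{lem:2letfact}), all pinched between $2V_{\lambda}^{q}(\mathbf{K}')$ on both ends. The only difference is cosmetic---the paper runs a single chain $2V_{\lambda}^{q}(\mathbf{K}')\le\cdots\le\overrightarrow{S}_{\lambda}^{q}(\mathbf{U}_{1})+\overrightarrow{S}_{\lambda}^{q}(\mathbf{U}_{2})\le 2V_{\lambda}^{q}(\mathbf{K}')$ and reads off equality in step~(d), whereas you keep $\overrightarrow{s}_{\lambda}^{q}(\mathbf{U}_{1}|W)$ intact in the factorization step and thereby also exhibit $W=(V_{1},V_{2})$ as a witnessing auxiliary.
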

\begin{proof}
Consider the two letter broadcast channel $q(\textbf{y}_{\textbf{1}},\textbf{z}_{\textbf{1}}{\vert}\textbf{x}_{\textbf{1}}){\times}q(\textbf{y}_{\textbf{2}},\textbf{z}_{\textbf{2}}{\vert}\textbf{x}_{\textbf{2}})$. Let $\textbf{Y}^a_{\textbf{1}}{\vert}((V_{1},V_{2})=(v_{1},v_{2})) \sim \frac{1}{\sqrt{2}}({\textbf{Y}}_{v_{1}}+{\textbf{Y}}_{v_{2}})$, $\textbf{Y}^a_{\textbf{2}}{\vert}((V_{1},V_{2})=(v_{1},v_{2})) \sim \frac{1}{\sqrt{2}}({\textbf{Y}}_{v_{1}}-{\textbf{Y}}_{v_{2}})$, $\textbf{Z}^a_{\textbf{1}}{\vert}((V_{1},V_{2})=(v_{1},v_{2})) \sim \frac{1}{\sqrt{2}}({\textbf{Z}}_{v_{1}}+{\textbf{Z}}_{v_{2}})$ and $\textbf{Z}^a_{\textbf{2}}{\vert}((V_{1},V_{2})=(v_{1},v_{2})) \sim \frac{1}{\sqrt{2}}({\textbf{Z}}_{v_{1}}-{\textbf{Z}}_{v_{2}})$. We have
\begin{align}
2V_{\lambda}^{q}({\textbf{K}}^{'}) &{\leq} \max_{\substack{p(v_{1},v_{2}{\vert}{\textbf{x}}_{\textbf{1}},{\textbf{x}}_{\textbf{2}})\\ \e{{\textbf{X}}_{\textbf{1}}{\textbf{X}}_{\textbf{1}}^{\textbf{T}}} {\preccurlyeq} {\textbf{K}^{'}}, \e{{\textbf{X}}_{\textbf{2}}{{\textbf{X}}_{\textbf{2}}^{\textbf{T}}}} {\preccurlyeq} {\textbf{K}^{'}}\\ (V_{1},V_{2}) \to ({\textbf{X}}_{\textbf{1}},{\textbf{X}}_{\textbf{2}}) \to ({\textbf{Y}}_{\textbf{1}},{\textbf{Y}}_{\textbf{2}})}} \overrightarrow{s}_{\lambda}^{q{\times}q}({\textbf{X}}_{\textbf{1}},{\textbf{X}}_{\textbf{2}}{\vert}V_{1},V_{2}) \notag\\
&\stackrel{(a)} = \max_{\substack{p(v_{1},v_{2}{\vert}{\textbf{u}}_{\textbf{1}},{\textbf{u}}_{\textbf{2}})\\ \e{{\textbf{U}}_{\textbf{1}}{{\textbf{U}}_{\textbf{1}}^{\textbf{T}}}} {\preccurlyeq} {{\textbf{K}}^{'}}, \e{{\textbf{U}}_{\textbf{2}}{{\textbf{U}}_{\textbf{2}}^{\textbf{T}}}} {\preccurlyeq} {\textbf{K}^{'}}\\ (V_{1},V_{2}) \to ({\textbf{U}}_{\textbf{1}},{\textbf{U}}_{\textbf{2}}) \to ({\textbf{Y}}^a_{\textbf{1}},{\textbf{Y}}^a_{\textbf{2}})}} \overrightarrow{s}_{\lambda}^{q{\times}q}({\textbf{U}}_{\textbf{1}},{\textbf{U}}_{\textbf{2}}{\vert}V_{1},V_{2}) \notag\\
&\stackrel{(b)} = \overrightarrow{S}_{\lambda}^{q{\times}q}({\textbf{U}}_{\textbf{1}},{\textbf{U}}_{\textbf{2}}) \notag\\
&\stackrel{(c)} {\leq} \overrightarrow{S}_{\lambda}^{q}({\textbf{U}}_{\textbf{1}})+\overrightarrow{S}_{\lambda}^{q}({\textbf{U}}_{\textbf{2}}) \stackrel{(d)} {\leq} 2V_{\lambda}^{q}({\textbf{K}}^{{'}}),
\end{align}
where (a) follows from Property \ref{lem:rot} and 
\begin{align}
&\e{{\textbf{U}_{\textbf{1}}^T}{\textbf{U}_{\textbf{1}}}}=\e{{\textbf{U}_{\textbf{2}}^T}{\textbf{U}_{\textbf{2}}}} \notag\\
&=\sum_{v_{1},v_{2}} p_{\ast}(v_{1},v_{2})\frac{1}{2}(\e{{\textbf{X}_{\textbf{v}_1}^T}{\textbf{X}_{\textbf{v}_1}}}+\e{{\textbf{X}_{\textbf{v}_2}^T}{\textbf{X}_{\textbf{v}_2}}}) {\preccurlyeq} {\textbf{K}^{'}},
\end{align}
(b) follows by the definition of concave envelope, (c) follows from Property \ref{lem:2letfact} and (d) follows by definition of $V_{\lambda}^{q}({\textbf{K}^{'}})$. Since the extremes match, all inequalities are equalities. In particular, (d) is an equality. Thus $\textbf{U}_{\textbf{1}}$ and $\textbf{U}_{\textbf{2}}$ attain $V_{\lambda}^{q}({\textbf{K}^{'}})$.
\end{proof}
\noindent The fact in Lemma \ref{lem:final} allows us to identify that Gaussian auxiliaries are indeed optimal. This is stated in the form of the following theorem.
\begin{theorem} \label{thm:final}
There exists $\textbf{X}^{\dagger} \sim \mathcal{N}(\textbf{0},\textbf{K}^{\dagger})$, $\textbf{K}^{\dagger} {\preccurlyeq} {\textbf{K}^{'}}$ such that $V_{\lambda}^{q}({\textbf{K}^{'}})=\overrightarrow{s}_{\lambda}^{q}(\textbf{X}^{\dagger})$.
\end{theorem}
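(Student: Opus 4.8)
The plan is to run the Geng--Nair ``doubling plus central-limit'' scheme, but with the directed-information functional $\overrightarrow{s}_{\lambda}^{q}$ in place of mutual information, so that the splitting auxiliary is driven to a degenerate distribution in the limit. Since feeding a Gaussian input directly (with a constant auxiliary) shows $V_{\lambda}^{q}({\textbf{K}^{'}})\ge\overrightarrow{s}_{\lambda}^{q}(\textbf{X})$ for every Gaussian $\textbf{X}$ with $\e{\textbf{X}\textbf{X}^{\textbf{T}}}\preccurlyeq{\textbf{K}^{'}}$, only the reverse inequality needs work: I must exhibit one Gaussian $\textbf{X}^{\dagger}$ with covariance $\preccurlyeq{\textbf{K}^{'}}$ and $\overrightarrow{s}_{\lambda}^{q}(\textbf{X}^{\dagger})=V_{\lambda}^{q}({\textbf{K}^{'}})$. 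I would start from the extremal pair $(V^{\dagger},\textbf{X}^{\dagger})$ exhibited just above ($|\mathcal{V}^{\dagger}|$ finite, $\overrightarrow{s}_{\lambda}^{q}(\textbf{X}^{\dagger}{\vert}V^{\dagger})=V_{\lambda}^{q}({\textbf{K}^{'}})$) and, using that mutual information is invariant under deterministic shifts, centre each conditional law $p^{\dagger}(\textbf{x}{\vert}V=v)$; this leaves $\overrightarrow{s}_{\lambda}^{q}(\textbf{X}^{\dagger}{\vert}V^{\dagger})$ unchanged and only relaxes the covariance constraint, so I may assume each conditional $\textbf{X}_{v}$ is zero mean, whence $\textbf{K}^{\dagger}\defeq\e{\textbf{X}^{\dagger}{\textbf{X}^{\dagger}}^{\textbf{T}}}=\sum_{v}p^{\dagger}(v)\,\e{\textbf{X}_{v}{\textbf{X}_{v}}^{\textbf{T}}}\preccurlyeq{\textbf{K}^{'}}$.

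Next I would iterate the rotation of Lemma~\ref{lem:final}. Put $(V^{(0)},\textbf{X}^{(0)})\defeq(V^{\dagger},\textbf{X}^{\dagger})$, and given $(V^{(k)},\textbf{X}^{(k)})$ with $\overrightarrow{s}_{\lambda}^{q}(\textbf{X}^{(k)}{\vert}V^{(k)})=V_{\lambda}^{q}({\textbf{K}^{'}})$, take two independent copies, form the rotated variable $\textbf{U}_{\textbf{1}}$ of Lemma~\ref{lem:final} together with its auxiliary $(V^{(k)}_{1},V^{(k)}_{2})$, and set $(V^{(k+1)},\textbf{X}^{(k+1)})\defeq\big((V^{(k)}_{1},V^{(k)}_{2}),\,\textbf{U}_{\textbf{1}}\big)$. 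The point is that, because the two ends of the chain in the proof of Lemma~\ref{lem:final} coincide, every inequality there — and in the proof of Property~\ref{lem:2letfact} that it invokes — is an equality; in particular the step bounding $\overrightarrow{s}_{\lambda}^{q}(\textbf{U}_{\textbf{1}}{\vert}V^{(k)}_{1},V^{(k)}_{2})$ by $\overrightarrow{S}_{\lambda}^{q}(\textbf{U}_{\textbf{1}})$ is tight, so $\overrightarrow{s}_{\lambda}^{q}(\textbf{U}_{\textbf{1}}{\vert}V^{(k)}_{1},V^{(k)}_{2})=\overrightarrow{S}_{\lambda}^{q}(\textbf{U}_{\textbf{1}})=V_{\lambda}^{q}({\textbf{K}^{'}})$ and the induction hypothesis propagates. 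Two structural facts then follow by unrolling: the rotation preserves (conditional) second moments, so $\e{\textbf{X}^{(k)}{\textbf{X}^{(k)}}^{\textbf{T}}}=\textbf{K}^{\dagger}$ for every $k$; and the conditional law of $\textbf{X}^{(k)}$ given $V^{(k)}=(v_{1},\dots,v_{2^{k}})$ is the law of $2^{-k/2}\sum_{j=1}^{2^{k}}\textbf{X}_{v_{j}}$, where the $\textbf{X}_{v_{j}}$ are independent and the $v_{j}$ are i.i.d.\ $\sim p^{\dagger}$.

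Finally I would let $k\to\infty$. By the strong law, $2^{-k}\sum_{j}\e{\textbf{X}_{v_{j}}{\textbf{X}_{v_{j}}}^{\textbf{T}}}\to\textbf{K}^{\dagger}$ for a.e.\ realization $(v_{1},v_{2},\dots)$, and since there are only finitely many summand laws — all zero mean with bounded covariance — the Lindeberg CLT gives $2^{-k/2}\sum_{j=1}^{2^{k}}\textbf{X}_{v_{j}}\Rightarrow\mathcal{N}(\textbf{0},\textbf{K}^{\dagger})$ for a.e.\ such realization, while these conditional covariances stay uniformly bounded (by a constant multiple of the identity). On the class of laws with a uniform second-moment bound the functional $\mu\mapsto\overrightarrow{s}_{\lambda}^{q}(\mu)$ is weakly continuous, since it differs by a constant from $h(\textbf{X}+\textbf{N})-\lambda\,h(\textbf{X}+\textbf{N}+\widetilde{\textbf{N}})$ and convolution with the non-degenerate Gaussians $\textbf{N}$ and $\textbf{N}+\widetilde{\textbf{N}}$ makes differential entropy weakly continuous there. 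Hence $\overrightarrow{s}_{\lambda}^{q}$ of the conditional law (given $V^{(k)}=v$) tends to $\overrightarrow{s}_{\lambda}^{q}(\mathcal{N}(\textbf{0},\textbf{K}^{\dagger}))$ for a.e.\ $v$, and integrating against $p(v)$ (bounded convergence) yields $V_{\lambda}^{q}({\textbf{K}^{'}})=\overrightarrow{s}_{\lambda}^{q}(\textbf{X}^{(k)}{\vert}V^{(k)})\to\overrightarrow{s}_{\lambda}^{q}(\mathcal{N}(\textbf{0},\textbf{K}^{\dagger}))$. Since $\textbf{K}^{\dagger}\preccurlyeq{\textbf{K}^{'}}$, taking $\textbf{X}^{\dagger}\sim\mathcal{N}(\textbf{0},\textbf{K}^{\dagger})$ closes the argument.

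I expect the two genuinely delicate points to be: (i) extracting from the equality case of Lemma~\ref{lem:final}/Property~\ref{lem:2letfact} the \emph{sharp} identity $\overrightarrow{s}_{\lambda}^{q}(\textbf{U}_{\textbf{1}}{\vert}V_{1},V_{2})=V_{\lambda}^{q}({\textbf{K}^{'}})$ rather than merely the concave-envelope statement $\overrightarrow{S}_{\lambda}^{q}(\textbf{U}_{\textbf{1}})=V_{\lambda}^{q}({\textbf{K}^{'}})$ — it is precisely this that makes the auxiliary dissolve in the limit, playing here the role that the ``optimal auxiliary is Gaussian'' perturbation argument plays in the feedback-free case; and (ii) the weak-continuity/uniform-integrability step that commutes the limit with $\overrightarrow{s}_{\lambda}^{q}$, which rests on the noise covariances being non-degenerate and on the bounded-covariance class being tight. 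Everything else is bookkeeping along the Geng--Nair template, now carried out for the directed-information concave envelope.
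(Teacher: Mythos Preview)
Your proposal is correct and follows essentially the same approach as the paper: start from the finite extremal pair $(V^{\dagger},\textbf{X}^{\dagger})$, iterate the rotation of Lemma~\ref{lem:final} on independent copies, and pass to the limit via the central limit theorem to conclude that a Gaussian input attains $V_{\lambda}^{q}({\textbf{K}^{'}})$. The paper's own proof is a two-line sketch (``$p_{o}*p_{o}$ works equally well \ldots\ by induction \ldots\ CLT''), whereas you have fleshed out the centring step, the tracking of covariances along the iteration, and the weak-continuity argument needed to pass to the limit; your flagged delicate points (i) and (ii) are exactly the places the paper's sketch leaves implicit, and in fact the paper's closing remark that ``all the $\textbf{X}_{v}$'s have the same covariance matrix'' is its nod to your point (i).
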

\begin{proof}
From Lemma \ref{lem:final}, ${\textbf{X}}_{{v_{1}}}+{\textbf{X}}_{{v_{2}}}$ attains $V_{\lambda}^{q}({\textbf{K}^{'}})$. Thus for any optimal distribution $p_{o}$, the distribution $p_{o} * p_{o}$ works equally well. Continuing this argument, by induction, we can use the Central Limit Theorem to conclude that Gaussian is a maximizer. Let $\textbf{X}^{\dagger} \sim \mathcal{N}(\textbf{0},\textbf{K}^{\dagger})$ be the maximizer, where $\textbf{K}^{\dagger} {\preccurlyeq} {\textbf{K}^{'}}$. Note that since $v_1$ and $v_2$ were arbitrary, all the $\textbf{X}_v$'s have the same covariance matrix, namely $\textbf{K}^{\dagger}$.
\end{proof}
\noindent We now complete the proof of Theorem \ref{thm:pdvecfb} using Theorem \ref{thm:final}.

\subsection{Proof of Theorem \ref{thm:pdvecfb}}
In expression \eqref{equi}, the first term on the LHS is maximized by $\textbf{X} \sim \mathcal{N}(\textbf{0},{\textbf{K}^{'}})$. The second term is maximized by $\textbf{U}={\textbf{X}}^{\dagger} \sim \mathcal{N}(\textbf{0},\textbf{K}^{\dagger})$, where $\textbf{K}^{\dagger} {\preccurlyeq} {\textbf{K}^{'}}$ by Theorem \ref{thm:final}. Now let $\textbf{X}={\textbf{X}}^{\dagger}+{\textbf{X}}^{\ast}=\textbf{U}+\textbf{V}$, where $\textbf{U}={\textbf{X}}^{\dagger} \sim \mathcal{N}(\textbf{0},{\textbf{K}}^{\dagger})$ and $\textbf{V}={\textbf{X}}^{\ast} \sim \mathcal{N}(\textbf{0},({\textbf{K}^{'}}-{\textbf{K}}^{\dagger}))$ are independent. This gives $\textbf{X} \sim \mathcal{N}(\textbf{0},{\textbf{K}^{'}})$ as desired. Evaluating  $I(\textbf{X};\textbf{Z})$ and $\overrightarrow{s}_{\lambda}^{q}(\textbf{X}{\vert}V)$ with these choices, it can be easily checked that \eqref{equi} is indeed satisfied. Hence we have proved the reverse inclusion ${\mathcal{O}}_{{\textbf{K}^{
 '}}} \subseteq {\mathcal{I}}_{{\textbf{K}^{'}}}$. This establishes that the capacity region of a PD-GVBC is not enlarged by feedback. We have also proved that a superposition of Gaussian codes is optimal.
\begin{remark}
Although our proofs were given for the model 
$\mathbf{Y}=\mathbf{X}+\mathbf{N}$ and $\mathbf{Z}=\mathbf{Y}+\widetilde{\mathbf{N}}$, it can be easily verified that the proofs go through for a more general model $\mathbf{Y}=G \mathbf{X}+\mathbf{N}$ and $\mathbf{Z}=\mathbf{Y}+\widetilde{\mathbf{N}}$ (where $G$ is an invertible matrix) as well. This justifies our terminology of Gaussian Vector BC. The capacity region in this more general case is given in the form of the following theorem:
\begin{theorem} \label{thm:gen}
The capacity region $\mathcal{C}_{MVBC}^{fb}$ of a MIMO PD-GBC with perfect feedback from both the receivers to the transmitter is the union of the set of $(R_{1},R_{2})$ pairs such that
\begin{gather}
R_{1} \leq \frac{1}{2} \log\Bigg(\frac{\left\vert{G^{T}\textbf{B}_{\textbf{1}}G+\textbf{K}}\right\vert}{\left\vert{\textbf{K}}\right\vert}\Bigg) \\
R_{2} \leq \frac{1}{2} \log\Bigg(\frac{\left\vert{ G^{T}(\textbf{B}_{\textbf{2}}+\textbf{B}_{\textbf{1}})G+\textbf{K}+\tilde{\textbf{K}}}\right\vert} {\left\vert{ G^{T}\textbf{B}_{\textbf{1}}G+\textbf{K}+\tilde{\textbf{K}}}\right\vert}\Bigg),
\end{gather}
for some $\textbf{B}_{\textbf{1}},\textbf{B}_{\textbf{2}}\succcurlyeq0$, with $\textbf{B}_{\textbf{1}}+\textbf{B}_{\textbf{2}}\:{\preccurlyeq}\:\textbf{K}^{'}$.
\end{theorem}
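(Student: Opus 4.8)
The plan is to obtain Theorem~\ref{thm:gen} as an essentially immediate corollary of Theorem~\ref{thm:pdvecfb}, by absorbing the invertible gain matrix $G$ into a change of the input variable. Concretely, I would set $\mathbf{X}' \defeq G\mathbf{X}$, so that the channel $\mathbf{Y}=G\mathbf{X}+\mathbf{N}$, $\mathbf{Z}=\mathbf{Y}+\widetilde{\mathbf{N}}$ becomes $\mathbf{Y}=\mathbf{X}'+\mathbf{N}$, $\mathbf{Z}=\mathbf{Y}+\widetilde{\mathbf{N}}$, which is exactly the model treated in Section~\ref{gbcfb}. Because $\mathbf{X}\mapsto\mathbf{X}'$ is a fixed invertible deterministic map, it changes every differential entropy only by the additive constant $\log|\det G|$ and hence leaves unchanged all the mutual-information and directed-information quantities used in the converse; it preserves the relevant Markov chains, in particular $\mathbf{X}'\to\mathbf{Y}\to\mathbf{Z}$ (so the transformed channel is still physically degraded, as $\widetilde{\mathbf{N}}$ is independent of $(\mathbf{X},\mathbf{N})$); and it does not alter the class of feedback codes, since the encoders $f_i(M_1,M_2,\mathbf{Y}^{i-1},\mathbf{Z}^{i-1})$ and the decoders operate on $\mathbf{Y},\mathbf{Z}$, which are untouched. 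The one object that changes is the input constraint, which turns into $\e{\mathbf{X}'(\mathbf{X}')^{\mathbf{T}}}\preccurlyeq G\mathbf{K}'G^{\mathbf{T}}$.

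Next I would apply Theorem~\ref{thm:pdvecfb} to the standard model with covariance constraint $\widehat{\mathbf{K}}'\defeq G\mathbf{K}'G^{\mathbf{T}}$ in place of $\mathbf{K}'$: its feedback capacity region is the union, over $\widehat{\mathbf{B}}_1,\widehat{\mathbf{B}}_2\succcurlyeq 0$ with $\widehat{\mathbf{B}}_1+\widehat{\mathbf{B}}_2\preccurlyeq G\mathbf{K}'G^{\mathbf{T}}$, of the rate pairs satisfying $R_1\le\frac12\log\big(|\widehat{\mathbf{B}}_1+\mathbf{K}|/|\mathbf{K}|\big)$ and $R_2\le\frac12\log\big(|\widehat{\mathbf{B}}_2+\widehat{\mathbf{B}}_1+\mathbf{K}+\widetilde{\mathbf{K}}|/|\widehat{\mathbf{B}}_1+\mathbf{K}+\widetilde{\mathbf{K}}|\big)$. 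I would then reparametrize through the congruence $\widehat{\mathbf{B}}_i = G\mathbf{B}_iG^{\mathbf{T}}$ induced by $G$. Since $G$ is invertible, $\mathbf{B}\mapsto G\mathbf{B}G^{\mathbf{T}}$ is a bijection of the positive semidefinite cone onto itself preserving the Loewner order, hence it maps the set $\{(\mathbf{B}_1,\mathbf{B}_2):\mathbf{B}_1,\mathbf{B}_2\succcurlyeq 0,\ \mathbf{B}_1+\mathbf{B}_2\preccurlyeq\mathbf{K}'\}$ bijectively onto $\{(\widehat{\mathbf{B}}_1,\widehat{\mathbf{B}}_2):\widehat{\mathbf{B}}_1,\widehat{\mathbf{B}}_2\succcurlyeq 0,\ \widehat{\mathbf{B}}_1+\widehat{\mathbf{B}}_2\preccurlyeq G\mathbf{K}'G^{\mathbf{T}}\}$; substituting $\widehat{\mathbf{B}}_i=G\mathbf{B}_iG^{\mathbf{T}}$ into the two determinant ratios recovers the region in the statement. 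For achievability one does not even need the transformation: taking $\mathbf{X}=\mathbf{U}+\mathbf{V}$ with independent Gaussians $\mathbf{U}\sim\mathcal{N}(\mathbf{0},\mathbf{B}_2)$ and $\mathbf{V}\sim\mathcal{N}(\mathbf{0},\mathbf{B}_1)$, $\mathbf{B}_1+\mathbf{B}_2\preccurlyeq\mathbf{K}'$, in the feedback-free superposition region $R_1\le I(\mathbf{X};\mathbf{Y}\mid\mathbf{U})$, $R_2\le I(\mathbf{U};\mathbf{Z})$ and evaluating the two Gaussian mutual informations directly reproduces the same bounds, so the feedback region is no smaller than the one claimed.

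I do not expect a genuine obstacle: the theorem is essentially a change-of-variable restatement of Theorem~\ref{thm:pdvecfb}, and all the work already done --- the subadditivity of the directed-information concave envelopes (Properties~\ref{factlem} and \ref{lem:2letfact}), the rotation invariance of directed information (Property~\ref{lem:rot}), and the central-limit/doubling argument (Theorem~\ref{thm:final}) --- is reused verbatim once $G$ has been absorbed into $\mathbf{X}'$ and $\mathbf{K}'$. The two points that must nonetheless be written out carefully are the bookkeeping items: (i) that the feedback-code class and the error events are genuinely invariant under $\mathbf{X}'=G\mathbf{X}$, which is immediate because $G$ is fixed and invertible; and (ii) that after the substitution $\widehat{\mathbf{B}}_i=G\mathbf{B}_iG^{\mathbf{T}}$ no stray factor $|\det G|$ survives in the rate bounds, which follows from $|\widehat{\mathbf{B}}_i+\mathbf{K}|=|G\mathbf{B}_iG^{\mathbf{T}}+\mathbf{K}|$ together with the cancellation of the common factors $|\mathbf{K}|$ and $|\mathbf{K}+\widetilde{\mathbf{K}}|$ in the ratios. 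Both are routine, so the proof amounts to little more than stating the reduction and the reparametrization.
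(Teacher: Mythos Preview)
Your proposal is correct and matches the paper's own treatment: the paper does not give a separate proof of Theorem~\ref{thm:gen} but merely remarks that the Section~\ref{gbcfb} arguments ``go through'' for the model $\mathbf{Y}=G\mathbf{X}+\mathbf{N}$, and your change-of-variable reduction $\mathbf{X}'=G\mathbf{X}$ followed by an application of Theorem~\ref{thm:pdvecfb} with constraint $G\mathbf{K}'G^{\mathbf{T}}$ is precisely the clean way to formalize that remark. One cosmetic point: your reparametrization naturally produces $G\mathbf{B}_iG^{\mathbf{T}}$ inside the determinants rather than the $G^{\mathbf{T}}\mathbf{B}_iG$ printed in the statement; your form is the one consistent with the direct achievability computation you outline, so the mismatch is a typographical slip in the statement rather than a flaw in your argument.
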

\end{remark}

\section{Feedback Capacity of Reversely PD DMBC}\label{rddmbc}
 We now extend our results to a reversely physically degraded DM broadcast channel (RPDBC). The model is as 
shown in Fig.~2. As seen in the figure, the terminology RPDBC is used to refer to a product of two (inconsistently) degraded DM broadcast channels. El Gamal~\cite{gamal1980pacity} derived the capacity region of a RPDBC with 
both private and common messages in the absence of feedback. In general, while the effect of feedback on a RPDBC has not been 
studied so far in literature, we show that feedback does not enlarge the capacity region of a RPDBC. 
This enables our results from the previous sections to be adapted so as to establish the capacity region 
of a Gaussian reversely PD vector BC with feedback. Here, we focus on deriving the discrete memoryless 
result. The definition of a code, achievable rate and capacity region are similar to \cite{gamal1980pacity}, 
except that we take 
\begin{align}
X_{ji} &= g_j(W_0,W_1,W_2,Y_1^{i-1},Z_1^{i-1},Y_2^{i-1},Z_2^{i-1}), \, j = 1,2, 
%X_{2i} &= f_2(W_0,W_1,W_2,Y_1^{i-1},Z_1^{i-1},Y_2^{i-1},Z_2^{i-1})
\end{align}
for some maps $g_1(\cdot)$ and $g_2(\cdot)$. Let us denote the capacity region of a RPDBC with both private and common messages with perfect noiseless feedback from all the outputs by $\mathcal{C}_{RBC}^{fb}$. % Now our result can be stated as follows:
\begin{figure}[h]
\centering
\includegraphics[scale=1.2]{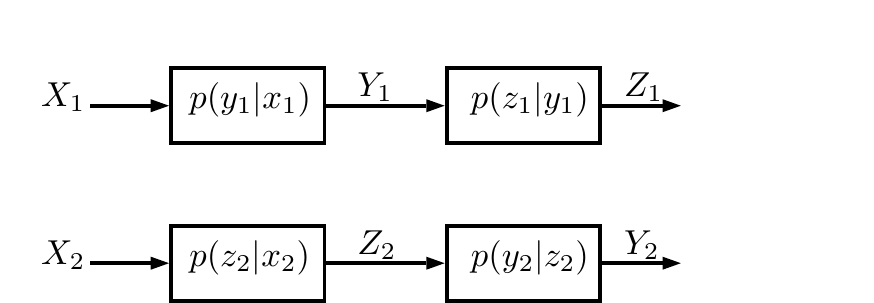}
\label{rdbc1}
\caption{Reversely Physically Degraded BC}
\end{figure}
\begin{theorem}\label{thm:rbc}
\begin{equation}
\mathcal{C}_{RBC}^{fb} = \bigcup \{(R_0,R_1,R_2)\},
\end{equation}
where
\begin{align}
R_0 &\leq \min \Bigl\{\sum_{i=1}^2 I(U_i;Y_i), \sum_{i=1}^2 I(U_i;Z_i) \Bigr\} \notag\\
%R_0 &\leq I(U_1;Z_1)+I(U_2;Z_2) \notag\\
R_0+R_1 &\leq I(X_1;Y_1)+I(U_2;Y_2) \notag\\
R_0+R_2 &\leq I(X_2;Z_2)+I(U_1;Z_1) \notag\\
R_0+R_1+R_2 &\leq I(X_1;Y_1)+I(U_2;Y_2)+I(X_2;Z_2|U_2) \label{rdc}\\
R_0+R_1+R_2 &\leq I(X_2;Z_2)+I(U_1;Z_1)+I(X_1;Y_1|U_1), \notag
\end{align}
with the union taken over all $p(u_1,x_1)p(u_2,x_2)$ such that $U_1 \to X_1 \to Y_1 \to Z_1$ and $U_2 \to X_2 \to Z_2 \to Y_2$, and $|\mathcal{U}_i| \leq \min (|\mathcal{X}_i|,|\mathcal{Y}_i|,|\mathcal{Z}_i|), i=1,2$. 
\end{theorem}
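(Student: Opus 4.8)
The plan is to show that the region of Theorem~\ref{thm:rbc} equals $\mathcal{C}_{RBC}^{fb}$ by a two-sided argument, one side of which is free. Achievability is immediate: the stated region is exactly El~Gamal's no-feedback capacity region of the RPDBC~\cite{gamal1980pacity}, and any code that simply discards the feedback is a legitimate feedback code, so that region lies in $\mathcal{C}_{RBC}^{fb}$. Hence the entire content of the theorem is the converse — that every triple achievable \emph{with} feedback already obeys the five inequalities for some $p(u_1,x_1)p(u_2,x_2)$ with $U_1\to X_1\to Y_1\to Z_1$, $U_2\to X_2\to Z_2\to Y_2$, and the stated cardinalities.

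For the converse I would fix a sequence of feedback codes with $\lambda_n\to0$ and apply Fano's inequality at both decoders, using that decoder~$1$ recovers $(W_0,W_1)$ from $(Y_1^n,Y_2^n)$, decoder~$2$ recovers $(W_0,W_2)$ from $(Z_1^n,Z_2^n)$, and $W_0,W_1,W_2$ are mutually independent. With $\epsilon_n\to0$ this gives $nR_0\le\min\{I(W_0;Y_1^n,Y_2^n),\,I(W_0;Z_1^n,Z_2^n)\}+n\epsilon_n$, $n(R_0+R_1)\le I(W_0,W_1;Y_1^n,Y_2^n)+n\epsilon_n$, $n(R_0+R_2)\le I(W_0,W_2;Z_1^n,Z_2^n)+n\epsilon_n$, and two mixed bounds for $n(R_0+R_1+R_2)$ obtained by adding $I(W_2;Z_1^n,Z_2^n\vert W_0,W_1)$, resp.\ $I(W_1;Y_1^n,Y_2^n\vert W_0,W_2)$, to the appropriate pair bound.

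Everything then hinges on single-letterizing these into the claimed region, and feedback is exactly what makes this step subtle. Using the product law $\prod_i p(y_{1i},z_{1i}\vert x_{1i})\,p(y_{2i},z_{2i}\vert x_{2i})$, I would split each multi-letter term by the chain rule into a component-$1$ and a component-$2$ contribution, introduce a time-sharing variable $Q\sim\mathrm{Unif}[1{:}n]$ independent of everything, and set $X_j=X_{jQ}$, $Y_j=Y_{jQ}$, $Z_j=Z_{jQ}$ together with auxiliaries $U_1,U_2$ that aggregate $W_0$ (and, for the pair- and sum-rate bounds, the private message decoded through the other component) with the \emph{past} — but not the future — outputs of the relevant component. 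Restricting to past outputs is essential: a future output dropped into $U_1$ or $U_2$ would break the chain $U_j\to X_j\to(\text{better output})\to(\text{worse output})$ around the feedback loop, whereas with only past outputs the two chains hold automatically, because the channel is memoryless and its present noise is independent of the past \emph{regardless} of feedback. Two facts then finish the job: (i) the within-component degradedness is a property of the memoryless channel law and therefore survives feedback, so a receiver-$2$ term $I(\cdot;Z_1^n\vert\cdots)$ merges with the matching receiver-$1$ term into $\sum_i I(X_{1i};Y_{1i})$ after replacing $Z_1^n$ by $Y_1^n$, and symmetrically $Y_2^n$-terms become $Z_2^n$-terms in component~$2$, which yields the $I(X_1;Y_1)$, $I(X_2;Z_2)$, $I(X_1;Y_1\vert U_1)$, $I(X_2;Z_2\vert U_2)$ contributions; (ii) the Csisz\'ar sum identity, applied once per component, reshuffles the cross-conditioning on the two components' past outputs that feedback creates, which is what makes one and the same $(U_1,U_2)$ simultaneously valid for the $Y$- and $Z$-versions of the $R_0$ bound and for the pair and sum bounds, producing the $I(U_1;Y_1)$, $I(U_1;Z_1)$, $I(U_2;Y_2)$, $I(U_2;Z_2)$ contributions. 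A final Fenchel--Eggleston--Carath\'eodory step trims $U_1,U_2$ to the cardinalities $\min(\vert\mathcal X_i\vert,\vert\mathcal Y_i\vert,\vert\mathcal Z_i\vert)$, as in~\cite{gamal1980pacity}.

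I expect the main obstacle to be this simultaneity in (ii): forcing a single pair $(U_1,U_2)$ to validate the $R_0$, $R_0+R_1$, $R_0+R_2$ and both $R_0+R_1+R_2$ bounds at once, while keeping both opposite-direction Markov chains intact under feedback and while never allowing the feedback-induced correlation between the two product components — the very coupling the paper flagged via the $I(\mathbf X_2;\mathbf Z_1\vert\mathbf X_1)$ example — to surface as a stray positive information term rather than as a conditioning variable. Arranging the bookkeeping of which past symbol of which component is placed in $U_1$ versus $U_2$ so that it is consistent across all five bounds, and discharging it with the two Csisz\'ar-sum manipulations, is the delicate core; Fano, the chain rule, time-sharing, the degradedness substitutions and the cardinality bound are then routine.
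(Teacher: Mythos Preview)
Your overall architecture matches the paper's---achievability for free from~\cite{gamal1980pacity}, converse via Fano plus single-letterization---but the converse mechanics differ in exactly the place you single out. The paper does \emph{not} restrict the auxiliaries to past outputs: it takes
\[
U_{1i}=(W_0,W_2,Y_1^{i-1},Z_1^{i-1},Y_2^{\,n}),\qquad U_{2i}=(W_0,W_1,Z_2^{i-1},Y_2^{i-1},Z_1^{\,n}),
\]
each carrying the \emph{entire} output sequence (past and future) of the opposite component, and simply asserts that the required chains $U_{1i}\to X_{1i}\to Y_{1i}\to Z_{1i}$ and $U_{2i}\to X_{2i}\to Z_{2i}\to Y_{2i}$ hold. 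The single-letterization is then a long direct chain-rule/conditioning computation tracking~\cite{gamal1980pacity} step by step, with no Csisz\'ar-sum identity and no explicit time-sharing variable. Your past-only-plus-Csisz\'ar-sum route is therefore a genuinely different bookkeeping strategy: it makes the Markov structure transparent---your observation that a future cross-component output can correlate with the present output through the feedback loop is well taken, and the paper's one-line assertion on this point is terse---at the cost of having to force a single $(U_1,U_2)$ to serve all five bounds simultaneously, which is precisely the obstacle you flag. The paper buys simplicity on that front by absorbing the full cross-component sequence into each auxiliary so that the five bounds single-letterize in parallel; your route would buy a cleaner Markov justification at the price of the more intricate identity-juggling step whose details you have not yet supplied.
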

\noindent Note that the region mentioned above is exactly the region without feedback, which was established in \cite{gamal1980pacity}.
\begin{proof}
The achievability follows from \cite{gamal1980pacity} by ignoring the feedback. Since the steps in the converse proof are similar to \cite{gamal1980pacity}, we only highlight the differences. It turns out that employing the following auxiliaries
\begin{align}
U_{1i} &= (W_0,W_2,Y_1^{i-1},Z_1^{i-1},Y_2^n)\\
U_{2i} &= (W_0,W_1,Z_2^{i-1},Y_2^{i-1},Z_1^n)
\end{align}
in the converse proof of \cite{gamal1980pacity} suffices to obtain the theorem. Observe that the 
above choice of auxiliaries ensures both the Markov conditions $U_{1i} \to X_{1i} \to Y_{1i} \to Z_{1i}$ 
and $U_{2i} \to X_{2i} \to Z_{2i} \to Y_{2i}$. These conditions are not true for the original auxiliaries in 
\cite{gamal1980pacity}.
%
% On the other hand, notice that this is not true for the auxiliary choices in \cite{gamal1980pacity} because, in the presence of feedback, $Z_{1i}$ may still depend on $Z_1^{i-1}$ given $Y_1^{i-1}$ and $Y_{2i}$ may still depend on $Y_2^{i-1}$ given $Z_2^{i-1}$. We shall show the single letterization of one of the sum rate bounds, 

We show the single-letterization of  \eqref{rdc}, others follow in a similar fashion. 
Also, we suppress the $n\epsilon_n$ terms while applying Fano's inequality in the expression below.
\begin{align}
&n(R_0+R_1+R_2) = H(W_0,W_1)+H(W_2|W_0,W_1) \notag\\
&\leq I(W_0,W_1;Y_1^n,Y_2^n)+I(W_2;Z_1^n,Z_2^n|W_0,W_1) \notag\\
&= I(W_0,W_1;Y_2^n)+I(W_0,W_1;Y_1^n|Y_2^n)+I(W_2;Z_2^n|W_0,W_1,Z_1^n)+I(W_2;Z_1^n|W_0,W_1) \notag\\
&= I(W_0,W_1,Z_1^n;Y_2^n)-I(Z_1^n;Y_2^n|W_0,W_1) +I(W_2;Z_2^n|W_0,W_1,Z_1^n)+I(W_0,W_1;Y_1^n|Y_2^n) +I(W_2;Z_1^n|W_0,W_1) \notag\\
&= I(W_0,W_1,Z_1^n;Y_2^n)-I(Z_1^n;Y_2^n|W_0,W_1) +I(W_2;Z_2^n|W_0,W_1,Z_1^n)+I(W_0,W_1;Y_1^n|Y_2^n) \notag\\
&\phantom{ww}+I(W_2;Z_1^n,Y_2^n|W_0,W_1)-I(W_2;Y_2^n|W_0,W_1,Z_1^n) \notag\\
&= I(W_0,W_1,Z_1^n;Y_2^n)-I(Z_1^n;Y_2^n|W_0,W_1)+I(W_2;Z_2^n|W_0,W_1,Z_1^n)+I(W_0,W_1;Y_1^n|Y_2^n) \notag\\
&\phantom{ww}+I(W_2;Z_1^n|W_0,W_1,Y_2^n)+I(W_2;Y_2^n|W_0,W_1)-I(W_2;Y_2^n|W_0,W_1,Z_1^n) \notag\\
&= I(W_0,W_1,Z_1^n;Y_2^n)-I(Z_1^n;Y_2^n|W_0,W_1) +I(W_2;Z_2^n|W_0,W_1,Z_1^n)+I(W_0,W_1;Y_1^n|Y_2^n) \notag\\
&\phantom{ww}+I(W_2;Z_1^n|W_0,W_1,Y_2^n)+H(Y_2^n|W_0,W_1) -H(Y_2^n|W_0,W_1,W_2)-H(Y_2^n|W_0,W_1,Z_1^n) \notag\\
&\phantom{ww}+H(Y_2^n|W_0,W_1,Z_1^n,W_2) \notag\\
&= I(W_0,W_1,Z_1^n;Y_2^n)+I(W_2;Z_2^n|W_0,W_1,Z_1^n) +I(W_0,W_1;Y_1^n|Y_2^n)+I(W_2;Z_1^n|W_0,W_1,Y_2^n) \notag\\
&\phantom{ww}-I(Y_2^n;Z_1^n|W_0,W_1,W_2) \notag\\
&\leq I(W_0,W_1,Z_1^n;Y_2^n)+I(W_2;Z_2^n|W_0,W_1,Z_1^n) +I(W_0,W_1;Y_1^n|Y_2^n)+I(W_2;Z_1^n|W_0,W_1,Y_2^n) \notag\\
&\leq I(W_0,W_1,Z_1^n;Y_2^n)+I(W_2;Z_2^n|W_0,W_1,Z_1^n) +I(W_0,W_1;Y_1^n|Y_2^n)+I(W_2;Y_1^n|W_0,W_1,Y_2^n) \notag\\
&\leq I(W_0,W_1,Z_1^n;Y_2^n)+I(W_2;Z_2^n|W_0,W_1,Z_1^n) +H(Y_1^n|Y_2^n)-H(Y_1^n|W_0,W_1,W_2,Y_2^n) \notag\\
&\leq I(W_0,W_1,Z_1^n;Y_2^n)+I(W_2;Z_2^n|W_0,W_1,Z_1^n) +H(Y_1^n)-H(Y_1^n|W_0,W_1,W_2,Y_2^n,X_1^n) \notag\\
&= I(W_0,W_1,Z_1^n;Y_2^n)+I(W_2;Z_2^n|W_0,W_1,Z_1^n) +H(Y_1^n)-H(Y_1^n|X_1^n) \notag\\
&= I(W_0,W_1,Z_1^n;Y_2^n)+I(X_1^n;Y_1^n) +I(W_2;Z_2^n|W_0,W_1,Z_1^n) \notag\\
&= \sum_{i=1}^n \{I(W_0,W_1,Z_1^n;Y_{2i}|Y_2^{i-1})+H(Y_{1i}|Y_1^{i-1}) +I(W_2;Z_{2i}|Z_2^{i-1},W_0,W_1,Z_1^n)-H(Y_{1i}|Y_1^{i-1},X_1^n)\} \notag\\
&\leq \sum_{i=1}^n \{I(W_0,W_1,Z_1^n,Y_2^{i-1};Y_{2i})-H(Y_{1i}|X_{1i}) +H(Y_{1i})+I(W_2;Z_{2i}|Z_2^{i-1},W_0,W_1,Z_1^n,Y_2^{i-1})\} \notag\\
&\leq \sum_{i=1}^n \{I(W_0,W_1,Z_1^n,Y_2^{i-1},Z_2^{i-1};Y_{2i})+I(X_{1i};Y_{1i}) +I(W_2;Z_{2i}|Z_2^{i-1},W_0,W_1,Z_1^n,Y_2^{i-1})\} \notag\\
&= \sum_{i=1}^n \{I(U_{2i};Y_{2i})+I(W_2;Z_{2i}|U_{2i})+I(X_{1i};Y_{1i})\} \notag\\
&= \sum_{i=1}^n \{I(X_{1i};Y_{1i})+I(U_{2i};Y_{2i})+H(Z_{2i}|U_{2i}) -H(Z_{2i}|W_0,W_1,W_2,Z_1^n,Y_2^{i-1},Z_2^{i-1})\} \notag\\
%&\leq \sum_{i=1}^n \{I(X_{1i};Y_{1i})+I(U_{2i};Y_{2i})+H(Z_{2i}|U_{2i}) \notag\\
%&\phantom{ww}-H(Z_{2i}|W_0,W_1,W_2,Z_1^n,Y_2^{i-1},Z_2^{i-1},Y_1^{i-1})\} \notag\\
&\leq \sum_{i=1}^n \{I(X_{1i};Y_{1i})+I(U_{2i};Y_{2i})+H(Z_{2i}|U_{2i}) -H(Z_{2i}|W_0,W_1,W_2,Z_1^n,Y_2^{i-1},Z_2^{i-1},Y_1^{i-1},X_{2i})\} \notag\\
&= \sum_{i=1}^n \{I(X_{1i};Y_{1i})+I(U_{2i};Y_{2i})+H(Z_{2i}|U_{2i}) -H(Z_{2i}|X_{2i})\} \notag\\
&\leq \sum_{i=1}^n \{I(X_{1i};Y_{1i})+I(U_{2i};Y_{2i})+H(Z_{2i}|U_{2i})  - H(Z_{2i}|X_{2i},U_{2i})\} \notag\\
&= \sum_{i=1}^n \{I(X_{1i};Y_{1i})+I(U_{2i};Y_{2i})+I(X_{2i};Z_{2i}|U_{2i})\}
\end{align}
The single letterization is completed by letting $n \to \infty$ which makes $\epsilon_n \to 0$ and noting that the region in Theorem \ref{thm:rbc} is convex.
\end{proof}

\section{Conclusion}\label{concl}
We proved that feedback does not enlarge the capacity region of a physically degraded Gaussian vector BC. Our proof does not rely on the use of the EPI or any of its variants, and makes use of factorization of concave envelopes in terms of directed information. It illustrates the utility of the technique of factorization of concave envelopes as a tool for proving converses in Gaussian settings, where previously the application of EPI was the main bottleneck. Our work also brings out the fact that factorizing concave envelopes of directed information can handle situations involving feedback, where concave envelopes of mutual information are not factorizable. We also proved that the capacity region of a reversely physically degraded DM broadcast channel is not enlarged by feedback.

\section*{Acknowledgements}
The work was supported in part by the Bharti Centre for Communication, IIT Bombay and in part by the Department of Science and Technology (DST), Government of India, under grant SB/S3/EECE/077/2013.

\bibliographystyle{IEEEtran}
\bibliography{refpdgbc}

\end{document}